\title{Linearly Self-Equivalent APN Permutations in Small Dimension\thanks{This  work  was  funded  by Deutsche  Forschungsgemeinschaft  (DFG);  project  number 411879806 and by DFG under Germany's Excellence Strategy - EXC 2092 CASA - 390781972.}\thanks{This is the version accepted to IEEE Transactions on Information Theory. DOI of the final published version: \href{https://dx.doi.org/10.1109/TIT.2021.3071533}{10.1109/TIT.2021.3071533}.}\thanks{
$\copyright$ 2021 IEEE.  Personal use of this material is permitted.  Permission from IEEE must be obtained for all other uses, in any current or future media, including reprinting/republishing this material for advertising or promotional purposes, creating new collective works, for resale or redistribution to servers or lists, or reuse of any copyrighted component of this work in other works.}}
\renewcommand\footnotemark{}
\author{Christof Beierle, Marcus Brinkmann, Gregor Leander \\ \\
Ruhr University Bochum, Bochum, Germany}
\date{}
\newcommand{\F}{\mathbb{F}}
\newcommand{\id}{\mathrm{id}}
\newcommand{\Aut}{\mathsf{Aut}}
\newcommand{\AutAE}{\mathsf{Aut_{AE}}}
\newcommand{\AutLE}{\mathsf{Aut_{LE}}}
\newcommand{\GL}{\mathrm{GL}}
\newcommand{\AGL}{\mathrm{AGL}}
\newcommand{\Ord}{\mathrm{Ord}}
\newcommand{\ord}{\mathrm{ord}}
\newcommand{\comm}{\mathrm{Comm}}
\newcommand{\rcf}{\mathrm{RCF}}
\newcommand{\lcm}{\mathrm{lcm}}
\newcommand{\tr}{\mathrm{Tr}}
\newcommand{\wt}{\mathrm{wt}}
\newcommand{\companion}{\mathrm{Comp}}
\newtheorem{definition}{Definition}
\newtheorem{lemma}{Lemma}
\newtheorem{proposition}{Proposition}
\newtheorem{theorem}{Theorem}
\newtheorem{corollary}{Corollary}
\newtheorem{conjecture}{Conjecture}
\theoremstyle{remark}
\newtheorem{example}{Example}
\newtheorem{remark}{Remark}
\begin{document}

\maketitle

\begin{abstract}
    All almost perfect nonlinear (APN) permutations that we know to date admit a special kind of linear self-equivalence, i.e., there exists a permutation $G$ in their CCZ-equivalence class and two linear permutations $A$ and $B$, such that $G \circ A = B \circ G$. After providing a survey on the known APN functions with a focus on the existence of self-equivalences, we search for APN permutations in dimension 6, 7, and 8 that admit such a linear self-equivalence. In dimension six, we were able to conduct an \emph{exhaustive search} and obtain that there is only one such APN permutation up to CCZ-equivalence. In dimensions 7 and 8, we performed an exhaustive search for all but a few classes of linear self-equivalences and we did not find any new APN permutation. As one interesting result in dimension 7, we obtain that all APN permutation polynomials with coefficients in $\F_2$ must be (up to CCZ-equivalence) monomial functions.
    
    {\bf Keywords:} APN permutations, differential cryptanalysis, self-equivalence, automorphism, CCZ-equivalence, exhaustive search.
\end{abstract}

\section{Introduction}
Differential cryptanalysis~\cite{DBLP:journals/joc/BihamS91} certainly belongs to the most important attack vectors to consider when designing a new symmetric cryptographic primitive. The basic idea of this attack is that the adversary chooses an input difference $a$ in the plaintext space and evaluates the encryption of pairs of values $(x,x +a)$ for a plaintext $x$. The goal is to predict the output difference of the two ciphertexts with a high probability. Vectorial Boolean functions (also known as \emph{S-boxes}) that offer the best resistance against differential attacks are called \emph{almost perfect nonlinear (APN)}. More precisely, a function $F \colon \F_2^n \rightarrow \F_2^m$ is called APN if, for every $b \in \F_2^m$ and non-zero $a \in \F_2^n$, the equation $F(x) + F(x + a) = b$ has at most two solutions. We know some instances and constructions of APN functions. However, much less is known if we require $F$ to be a permutation. 

For odd values of $n$, we know infinite families of APN permutations and in particular, APN permutations exist for every odd value of $n$. For even values of $n$ we only know one sporadic example up to CCZ-equivalence, which is defined for $n=6$ (see~\cite{browning2010apn}). We refer to this permutation as ``Dillon's permutation'' henceforth. Exhaustive search for APN permutations is only possible as long as $n$ is small because the size of the search space increases rapidly. So far, an exhaustive search for APN permutations has only been conducted up to $n=5$ (see~\cite{DBLP:journals/dcc/BrinkmannL08}). Since already for $n=6$, the number of permutations in $\F_2^n$ is orders of magnitude higher than for $n=5$ (i.e., $64!\approx 2^{296}$ for $n=6$ compared to $32!\approx 2^{117.7}$ for $n=5$), the search space has to be restricted. Although restricting the search space is a natural idea, the question remains how to do so. Our idea is to only consider the class of permutations that we conjecture to contain all possible cases. Namely, we restrict to the class of permutations that admit a non-trivial \emph{linear self-equivalence}, i.e., those permutations $F$ for which there exist non-trivial linear permutations $A$ and $B$ such that $F \circ A = B \circ F$. Indeed, we observe that all known APN permutations admit, up to CCZ-equivalence, such a non-trivial linear self-equivalence.

\subsection{Our Contribution}
In the first part of this work, we provide a survey on all APN functions known from the literature and observe that they all admit a non-trivial automorphism. An \emph{automorphism} of a vectorial Boolean function $F \colon \F_2^n \rightarrow \F_2^m$ is an affine permutation in $\F_2^{n} \times \F_2^m$ that leaves the set $\{(x,F(x)) \mid x \in \F_2^n\}$ invariant. For all the known APN permutations $F \colon \F_2^n \rightarrow \F_2^n$, we show that there exists an automorphism of a special kind, i.e., there exists a permutation $G$ which is CCZ-equivalent to $F$ that admits a non-trivial linear self-equivalence. Since a linear self-equivalence is a special kind of automorphism, we also call it an \emph{LE-automorphism}. We conjecture that the CCZ-equivalence class of any APN permutation contains a permutation with a non-trivial LE-automorphism (Conjecture~\ref{conj}).

Based on this conjecture, our goal is to conduct an exhaustive search for all such APN permutations in small dimension. As a first step, we classify all possible LE-automorphisms that need to be considered in such a search. Since an LE-automorphism can be represented by a tuple $(A,B)$, where $A$ and $B$ are invertible matrices with coefficients in $\F_2$, we need to reduce the number of such tuples in order to conduct an efficient search. The most important observation here is that we only need to consider matrices up to similarity and identical cycle type. Surprisingly, our reductions lead to a very small number of tuples, i.e., $17$ for $n=6$, $27$ for $n=7$, and $32$ for $n=8$. We stress that this reduction is valid for any kind of search among the permutations with non-trivial LE-automorphisms; it is not restricted to APN permutations.

We then use this classification of LE-automorphisms to search for APN permutations in dimension $n \in \{6,7,8\}$. By using the APN property, we can exclude some of the LE-automorphisms immediately (Propositions~\ref{prop:inv} and~\ref{prop:poly}). We handled the rest using a recursive tree search (Algorithm~\ref{alg:search}). For $n=6$, we were able to conduct an \emph{exhaustive} search for the APN permutations with non-trivial LE-automorphisms and conclude that only Dillon's permutation remains. In other words, if Conjecture~\ref{conj} is true, this would be the only APN permutation in dimension $6$ up to CCZ-equivalence. For $n=7$, we found all the APN monomial permutations, but no more CCZ-equivalence classes.  For $n=8$, we did not find APN permutations. Since we computationally handled all but a few missing cases of possible LE-automorphisms for $n \in \{7,8\}$, we conclude that if new CCZ-equivalence classes of APN permutations with non-trivial LE-automorphisms exist, those automorphisms have to be of a special form (Theorem~\ref{thm:res7} and~\ref{thm:res8}).   Our search for $n=7$ completely covers the special case of \emph{shift-invariant permutations}, which correspond to all permutation polynomials in $\F_{2^7}$ with coefficients in $\F_2$ (there are 20,851,424,802,623,573,443,244,703,744,000 of those, see ~\cite{carlitz1972permutations} and OEIS sequence A326932~\cite{oeis}). We obtain that the only shift-invariant APN permutations in  dimension $7$ are monomial functions.

Our results can be useful for future searches for APN permutations. In particular, Theorem~\ref{thm:res7} and Theorem~\ref{thm:res8} provide significantly reduced search spaces for new APN permutations in dimension $n=7$ and $n=8$. If Conjecture~\ref{conj} holds, taking the known APN monomial permutations and conducting an exhaustive search over those remaining search spaces would be sufficient in order to classify all APN permutations in dimension $n \in \{7,8\}$ up to CCZ-equivalence.

\section{Preliminaries}
We denote by $\mathbb{N} = \{1,2,3,\dots\}$ the set of natural numbers and by $\mathbb{Z}$ the set of integers. Let $\F_2 = \{0,1\}$ be the field with two elements and, for $n \in \mathbb{N}$, let $\F_2^n$ be the $n$-dimensional vector space over $\F_2$. Let $\GL(n,\F_2)$ denote the group of invertible $n \times n$ matrices over $\F_2$ and let $\AGL(n,\F_2)$ denote the group of affine permutations on $\F_2^n$, i.e., the set of functions of the form $x \mapsto Lx + b$ for $L \in \GL(n,\F_2)$ and $b \in \F_2^n$. Throughout this paper, we will use matrices and the linear functions that they represent interchangeably. In other words, for an $n \times n$ matrix $L$ over $\F_2$, we simply use the symbol $L$ for the function $x \mapsto Lx$. We denote by $I_n$ the identity matrix in $\GL(n,\F_2)$. We denote a block-diagonal matrix consisting of blocks $M_1,M_2,\dots,M_k$ by $M_1 \oplus M_2 \oplus \dots \oplus M_k$, where $M_1$ corresponds to the block in the upper left corner of the block-diagonal matrix. For a matrix $M \in \GL(n,\F_2)$, we denote by $\ord(M)$ the \emph{multiplicative order} of $M$, i.e., the smallest positive integer $i$ such that $M^{i} = I_n$. Similarly, for a vector $x \in \F_2^n$, we denote by $\ord_M(x)$ the smallest positive integer $i$ for which $M^i(x) = x$. It is well known that $\ord_M(x) \mid \ord(M)$. Indeed, suppose that $\ord(M) = r + k \cdot \ord_M(x)$ for a non-negative integer $r < \ord_M(x)$. We then have $x = M^{\ord(M)}(x) = M^{r+k \cdot \ord_M(x)}(x) = M^r(x)$, implying that $r=0$. Recall also that the \emph{minimal polynomial} of a matrix $M$ over $\F_2$ is the polynomial $p \in \F_2[X]$ of least degree such that $p(M)=0$.

For a polynomial 
$q = X^n + q_{n-1}X^{n-1} + \dots + q_1X + q_0 \in \mathbb{F}_2[X]$, the \emph{companion matrix} of $q$ is defined as the $n \times n$ matrix
\[
\companion(q) \coloneqq \left[\begin{array}{ccccc}
0 &  & &  & q_0 \\
1 & 0 &  & & q_1 \\
 & \ddots & \ddots &  & \vdots \\
 &  & 1 & 0 & q_{n-2} \\
 &  &  & 1 & q_{n-1}
\end{array}\right].
\]
The matrix $\companion(q)$ is invertible if and only if $q_0=1$.

A vectorial Boolean function $F \colon \F_2^n \rightarrow \F_2^n$ can be uniquely expressed as a multivariate polynomial in $\F_2^n[X_1,\dots,X_n]/(X_1^2+X_1,\dots,X_n^2+X_n)$ via the \emph{algebraic normal form}:
\[ F(x_1,\dots,x_n) = \sum_{u \in \F_2^n} \left( a_u \prod_{i \in \{1,\dots,n\}}x_i^{u_i}\right), \quad a_u \in \F_2^n.\]
 The \emph{algebraic degree} of $F$ is defined as $\max\{\wt(u) \mid a_u \neq 0, u \in \F_2^n\}$ where $\wt(u)$ denotes the Hamming weight of $u$. We recall that the \emph{Hamming weight} of a vector $u \in \F_2^n$ is defined as the number of its non-zero coordinates. The function $F$ is called \emph{affine} if it is of algebraic degree at most 1 and it is called \emph{quadratic} if it is of algebraic degree 2. The finite field $\F_{2^n}$ with $2^n$ elements is isomorphic to $\F_2^n$ as a vector space and any function $F \colon \F_2^n \rightarrow \F_2^n$ can also be uniquely represented as a univariate polynomial $f \coloneqq \sum_{i=0}^{2^n-1}\omega_iX^i$ in $\F_{2^n}[X]/(X
^{2^n}+X)$, which is called the \emph{univariate representation} of $F$. We then have
\[F\colon \F_{2^n} \rightarrow \F_{2^n}, x \mapsto f(x).\]

Since this paper focuses on APN functions, we first recall the definition.

\begin{definition}{\cite{DBLP:conf/crypto/NybergK92}}
A vectorial Boolean function $F\colon \F_2^n \rightarrow \F_2^n$ is called \emph{almost perfect nonlinear} (APN) if, for every $a \in \F_2^n\setminus \{0\} ,b \in \F_2^n$, the equation $F(x) + F(x+a) = b$ has at most 2 solutions for $x \in \F_2^n$.
\end{definition}
Let $F, G \colon \F_2^n \rightarrow \F_2^n$ be vectorial Boolean functions in dimension $n$. There are several well-known equivalence relations on vectorial Boolean functions that preserve the APN property and are therefore quite useful for classifying vectorial Boolean functions. We say that $G$ is \emph{linear equivalent} to $F$ if there exist $A,B \in \GL(n,\F_2)$ such that $F \circ A = B \circ G$. In the more general case when $A$ and $B$ are in $\AGL(n,\F_2)$, $G$ and $F$ are called \emph{affine equivalent}. We say that $G$ is \emph{extended-affine equivalent} (or \emph{EA-equivalent} for short) to $F$ if there exist $A,B \in \AGL(n,\F_2)$ and an affine, not necessarily invertible, function $C\colon \F_2
^{n} \mapsto \F_2^n$ such that $F \circ A = B \circ G + C$. Finally, we consider the notion of CCZ-equivalence~\cite{DBLP:journals/tit/BudaghyanCP06,ccz}. Let $\Gamma_F \coloneqq \{ (x,F(x))^{\top} \mid x \in \F_2^n\}$ be the \emph{graph} of $F$, where $(x,F(x))
^{\top}$ denotes the transposition of $(x,F(x))$ to a column vector of length $2n$. The functions $F$ and $G$ are called \emph{CCZ-equivalent} if there exists $\sigma \in \AGL(2n,\F_2)$ such that $\Gamma_G = \sigma(\Gamma_F)$. The above notions of equivalence are ordered by generality with CCZ-equivalence being the most general. In other words, two linear-equivalent functions are also (extended-) affine equivalent and two (extended-) affine-equivalent functions are also CCZ-equivalent. The latter can be seen by taking
\[ \sigma = \left[\begin{array}{cc}A^{-1} & 0\\ B^{-1}CA^{-1} & B^{-1}\end{array}\right]\in \AGL(2n,\F_2).\]
Note that CCZ-equivalence is strictly more general than EA-equivalence (combined with taking inverses in the case of permutations)~\cite{DBLP:journals/tit/BudaghyanCP06}.

The automorphism group (see~\cite[Definition 13]{DBLP:journals/ffa/CanteautP19}) of a function $F \colon \F_2^n \rightarrow \F_2^n$ is defined as
\[ \Aut(F) \coloneqq \{ \sigma \in \AGL(2n,\F_2) \mid \Gamma_F = \sigma(\Gamma_F)\}.\]
Analogously, we define the subgroups $\AutAE$ and $\AutLE$ as follows:
\begin{align*}
\AutAE(F) &\coloneqq \left\{ \sigma \in \Aut(F) \mid \sigma = \left[\begin{array}{cc}A & 0\\ 0 & B\end{array}\right] \text{ for } A,B \in \AGL(n,\F_2)\right\}, \\
\AutLE(F) &\coloneqq \left\{ \sigma \in \Aut(F) \mid \sigma = \left[\begin{array}{cc}A & 0\\ 0 & B\end{array}\right] \text{ for } A,B \in \GL(n,\F_2)\right\}.
\end{align*}

We have $\{\id\} \subseteq \AutLE(F) \subseteq \AutAE(F) \subseteq \Aut(F) \subseteq \AGL(2n,\F_2)$, where $\id$ denotes the identity mapping in $\AGL(2n,\F_2)$. The automorphism group of $F$, resp., the subgroups $\AutAE(F)$ and $\AutLE(F)$ contain non-trivial elements if and only if $F$ is self-equivalent with respect to the corresponding equivalence relation. For instance, \[\left[\begin{array}{cc}A & 0\\ 0 & B\end{array}\right] \in \Aut(F) \quad \Leftrightarrow \quad F \circ A = B \circ F .\] 
Self-equivalences of vectorial Boolean functions in small dimension have already been considered in the PhD thesis~\cite{de2007analysis}.
Note that if $F\colon \F_2^n \rightarrow \F_2^n$ and $G\colon \F_2^n \rightarrow \F_2^n$ are CCZ-equivalent, resp., affine equivalent, resp., linear equivalent, we have $\Aut(F) \cong \Aut(G)$, resp., $\AutAE(F) \cong \AutAE(G)$, resp., $\AutLE(F) \cong \AutLE(G)$, where $\cong$ denotes that the corresponding groups are isomorphic. Therefore, it is enough to consider only one single representative in each equivalence class when determining the automorphism groups. Throughout this paper, we are especially interested in APN permutations $F$ with a non-trivial subgroup $\AutLE(F)$. If $|\AutLE(F)| > 1$, we say that $F$ has a non-trivial \emph{LE-automorphism} (or a non-trivial \emph{linear self-equivalence}).

\section{Automorphisms of Some (APN) Functions}
It is well known (see~\cite{browning2009apn}) that the automorphism group of a function $F\colon \F_{2^n} \rightarrow \F_{2^n}$ can be computed by considering the associated linear code $C_F$ for $F$ with parity-check matrix
\[\left[\begin{array}{ccccc}1 & 1 & 1 &\dots & 1\\0 & a^0 & a^1 & \dots & a^{2^n-2}\\ F(0) & F(a^0) & F(a^1) & \dots & F(a^{2^n-2}) \end{array}\right],\]
where $a$ is a primitive element of $\F_{2^n}^*$,
and computing its automorphism group, e.g., with the algorithm presented in~\cite{feulner2009automorphism}. However, in this part of the paper we are only interested in determining whether $\Aut(F)$, resp., $\AutLE(F)$ contains a non-trivial element, especially for the case of an APN function $F$. In the following, we study some interesting classes of functions and sporadic APN instances.
\subsection{Quadratic Functions}
A lot of families of functions studied in the literature are CCZ-quadratic, i.e., they are CCZ-equivalent to a quadratic function. If quadratic functions are represented as mappings from the finite field $\F_{2^n}$ to itself, their polynomial representation only contains (besides a constant monomial in case that 0 is not mapped to 0) monomials of the form $x^{2^i}$ or $x^{2^i+2^j}$ with non-zero coefficients. One reason that they are of significant interest is because studying them is often easier than studying general functions. For instance, if $F\colon \F_2
^n \rightarrow \F_2^n$ is a quadratic function and $\alpha \in \F_2
^n$, the first-order derivative $x \mapsto F(x) + F(x+\alpha)$ is affine. The following is a well-known result (see e.g.~\cite[Proposition 1]{DBLP:journals/dcc/BrackenBMN11} or~\cite[Theorem 4]{DBLP:journals/amco/EdelP09}).
\begin{proposition}\label{prop:quadratic}
Let $F\colon \F_2^n \rightarrow \F_2^n$ be a quadratic function. Then, $|\Aut(F)| \geq 2^n$.
\end{proposition}
\begin{proof}
Let $\alpha \in \F_{2^n} \setminus \{0\}$. Since $F$ is quadratic, $F(x)+F(x+\alpha)+F(\alpha)+F(0)$ is linear. Let us denote this function by $L_{\alpha}$. The function defined as
\[\sigma_{\alpha}\colon \F_2^{2n} \rightarrow \F_2^{2n}, \quad (x,y) \mapsto (x+\alpha,y+L_{\alpha}(x)+F(\alpha)+L_{\alpha}(\alpha) + F(0))\]
belongs to $\Aut(F)$. Indeed, it is easy to see that $\sigma_{\alpha}$ is an affine permutation. We further have
\begin{align*}
    \sigma_{\alpha} (\Gamma_F) &= \sigma_{\alpha}\{(x,F(x))\} = \{(x+\alpha,F(x)+L_{\alpha}(x)+F(\alpha)+L_{\alpha}(\alpha)+F(0))\} \\
    &= \{(x,F(x+\alpha)+L_{\alpha}(x)+F(\alpha)+F(0))\} = \{(x, F(x))\} = \Gamma_F.
\end{align*}
This implies that $\Aut(F)$ contains $2^n-1$ non-trivial elements. Since $\id$ is trivially contained in $\Aut(F)$, this concludes  the proof.
\end{proof}

Many of the known APN functions are CCZ-quadratic, for instance the Gold functions $x \mapsto x^{2^i+1}$ for $\gcd(i,n)=1$ (see~\cite{DBLP:journals/tit/Gold68,DBLP:conf/eurocrypt/Nyberg93}), the two functions from $\F_{2^{10}}$ and $\F_{2^{12}}$ to itself defined in~\cite{DBLP:journals/tit/EdelKP06}, and the  classes defined in~\cite{DBLP:journals/ffa/BrackenBMM08,DBLP:journals/ccds/BrackenBMM11,cryptoeprint:2020:295,9000901,DBLP:journals/tit/BudaghyanC08,DBLP:journals/tit/BudaghyanCL08,DBLP:journals/ffa/BudaghyanCL09,5351383,DBLP:journals/iacr/BudaghyanHK19,DBLP:journals/dcc/Carlet11,taniguchi,ZHOU201343}. Indeed, to the best of our knowledge, at the time of writing only \emph{a single} APN function is known which is not CCZ-equivalent to either a quadratic or a monomial function (see~\cite{DBLP:journals/dcc/BrinkmannL08,DBLP:journals/amco/EdelP09}). We refer to~\cite[Sections 11.4--11.5]{carlet_book}, Table 1.6 in the PhD thesis~\cite{thesis_arshad}, and to~\cite{recent_list} for a recent summary of the known infinite classes of APN functions. In~\cite{recent_list}, the authors proved the CCZ-equivalence between some of the known families and thus provided a more reduced list.

We leave it as an open question whether for a (quadratic) function $F$ with a non-trivial automorphism in $\Aut(F)$ there always exists a representative $G$ in the CCZ-equivalence class of $F$ with a non-trivial automorphism in $\AutLE(G)$. However, we will see in the following that for many APN functions, such a representative does exist in the CCZ-equivalence class.

\subsection{Shift-Invariant Functions}
\label{sec:shift-inv}
The \emph{shift-invariant} functions $F \colon \F_2^n \rightarrow \F_2^n$ are exactly those for which 
\[\left[\begin{array}{cc}\companion(X^n+1) & 0\\ 0 & \companion(X^n+1)\end{array}\right] \in \AutLE(F).\]
 If $F$ is represented as a function from $\F_{2^n}$ to $\F_{2^n}$ in the normal basis representation, this automorphism corresponds to squaring, i.e., for all $x \in \F_{2^n}$, $F(x^2) = F(x)^2$ (see~\cite{normal_basis}). Therefore, the shift-invariant functions correspond to the polynomials with coefficients in $\F_2$.

Many of the known families of APN functions belong to the class of shift-invariant functions. Those include all the APN monomial functions $x \mapsto x^d$ and also the functions defined in ~\cite{DBLP:conf/waifi/Budaghyan07,DBLP:journals/ffa/BudaghyanCL09,DBLP:journals/tit/BudaghyanCP06}. For the latter, this is because $\tr_m(x^2)=\tr_m(x)^2$,
where $\tr_m(x)=\sum_{i=0}^{\frac{n}{m}-1}x^{2^{mi}}$ denotes the trace function from $\F_{2^n}$ to a subfield $\F_{2^m}$.

In~\cite{yu:eprint}, a classification of quadratic shift-invariant APN functions over $\F_2^n$ for $n \leq 9$ is provided.

\subsection{APN Binomial (and some Multinomial) Functions}
For monomial functions, a non-trivial LE-automorphism can easily be given in terms of multiplication with finite field elements. In particular, if $x \in \F_{2^n}$, then, for any $\alpha \in \F_{2^n}\setminus \{0\}$, we have $(\alpha x)^d = \alpha^d x^d$.

The \emph{binomial functions} $F\colon \F_{2^n} \rightarrow \F_{2^n}$ are those that can be written as $F(x) = x^a + \omega x^b$, where $a,b \in \mathbb{Z}$ and $\omega \in \F_{2^n}$. For special choices of $a$ and $b$, we can also easily give an LE-automorphism as follows.

\begin{proposition}
\label{prop:binom}
Let $F\colon \F_{2^n} \rightarrow \F_{2^n}, x \mapsto x^a+\omega x^b$ be a binomial function. Let $\alpha \in \F_{2^n}$ be an element of order $d$, where $d|(b-a)$. Then
\[\left[\begin{array}{cc}\alpha & 0\\ 0 & \alpha^a\end{array}\right] \in \AutLE(F).\]
In particular, if $\gcd(b-a,2^n-1)\neq 1$, the group $\AutLE(F)$ is non-trivial.
\end{proposition}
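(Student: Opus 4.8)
The plan is to invoke the characterization recalled earlier in the excerpt, namely that a block-diagonal element with blocks $A$ and $B$ lies in $\Aut(F)$ exactly when $F \circ A = B \circ F$. For the matrix in the statement, $A$ is the map $x \mapsto \alpha x$ and $B$ is the map $y \mapsto \alpha^a y$. Both are multiplication by a fixed nonzero field element (note $\alpha \neq 0$, since it has a well-defined multiplicative order $d$), hence both are $\F_2$-linear and invertible, so they lie in $\GL(n,\F_2)$. This is precisely what will place the tuple in $\AutLE(F)$ rather than merely in $\Aut(F)$. The whole statement therefore reduces to verifying the single functional identity $F(\alpha x) = \alpha^a F(x)$ for all $x \in \F_{2^n}$.

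To verify this identity I would simply expand both sides. On the left, $F(\alpha x) = (\alpha x)^a + \omega (\alpha x)^b = \alpha^a x^a + \omega \alpha^b x^b$, while on the right $\alpha^a F(x) = \alpha^a x^a + \omega \alpha^a x^b$. The two expressions agree precisely when $\alpha^b = \alpha^a$, i.e.\ when $\alpha^{b-a} = 1$. This is exactly where the hypothesis enters: since $\ord(\alpha) = d$ and $d \mid (b-a)$, we obtain $\alpha^{b-a} = 1$, so the two sides coincide and the required identity holds. This settles the first assertion.

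For the ``in particular'' claim I would set $d \coloneqq \gcd(b-a, 2^n-1)$ and assume $d \neq 1$. Because $\F_{2^n}^*$ is cyclic of order $2^n-1$ and $d \mid (2^n-1)$, there exists an element $\alpha$ of order exactly $d$; moreover $d \mid (b-a)$ by definition of the greatest common divisor. The first part then shows that the corresponding tuple $(\alpha, \alpha^a)$ lies in $\AutLE(F)$, and since $d > 1$ forces $\alpha \neq 1$, the map $x \mapsto \alpha x$ is not the identity. Hence this is a non-trivial LE-automorphism, so $\AutLE(F)$ is non-trivial.

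I do not expect a genuine obstacle, since the heart of the argument is a one-line polynomial expansion together with the elementary order condition $\alpha^{b-a}=1$. The only points deserving care are confirming that $A$ and $B$ are genuinely invertible $\F_2$-linear maps, so that the element lands in $\AutLE(F)$ and not just in a larger subgroup of $\Aut(F)$, and, for the corollary, the appeal to the cyclicity of $\F_{2^n}^*$ to guarantee that an element of the prescribed order $d$ actually exists.
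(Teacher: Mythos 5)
Your proof is correct and follows essentially the same route as the paper's: expand $F(\alpha x)=\alpha^a x^a+\omega\alpha^b x^b$ and use $\alpha^{b-a}=1$, which holds since $\ord(\alpha)\mid(b-a)$. The additional details you supply (invertibility and $\F_2$-linearity of the multiplication maps, and cyclicity of $\F_{2^n}^*$ to get an element of order $\gcd(b-a,2^n-1)>1$ for the ``in particular'' claim) are exactly what the paper leaves implicit, so there is nothing to fix.
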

\begin{proof}
We have 
    $(\alpha x)^a + \omega (\alpha x)^b = \alpha^a x^a + \omega \alpha^b x^b = \alpha^a(x^a + \omega \alpha^{b-a} x^b)$, and $\alpha^{b-a} = 1$ because $\ord(\alpha)|(b-a)$. 
\end{proof}
    
    \begin{example}[APN function\footnote{recently classified into an infinite family, see~\cite{DBLP:journals/iacr/BudaghyanHK19}} defined in Theorem 2 of~\cite{DBLP:journals/tit/EdelKP06}]
        Let $u \in \F_{2^{10}}^*$ be an element of order 3. The function $F\colon \F_{2^{10}} \rightarrow \F_{2^{10}}, x \mapsto x^3 + \omega x^{36}$ is APN if and only if $\omega \in \{u\F_{2^5}^*\} \cup \{u^2\F_{2^5}^*\}$.
        
        Since  $\gcd(36-3,2^{10}-1) = 33$, a non-trivial automorphism in $\AutLE(F)$ can be given by an element of order $33$.  \qed
    \end{example}

\begin{example}[APN functions defined in Theorem 1 of~\cite{DBLP:journals/tit/BudaghyanCL08}]
\label{ex:cor1}
    Let $s$ and $k$ be positive integers with $\gcd(s,3k)=1$ and let $t \in \{1,2\}, i=3-t$. Let further $a = 2^s+1$ and $b = 2^{ik}+2^{tk+s}$ and let $\omega = \alpha^{2^k-1}$ for a primitive element $\alpha \in \F_{2^{3k}}^*$. If $\gcd(2^{3k}-1,(b-a)/(2^k-1)) \neq \gcd(2^k-1,(b-a)/(2^k-1))$, the function $F\colon \F_{2^{3k}} \rightarrow \F_{2^{3k}}, x \mapsto x^a + \omega x^b$ is APN.
    
    If $\gcd(2^{3k}-1,(b-a)/(2^k-1)) \neq 1$, the conditions of Proposition~\ref{prop:binom} are fulfilled and a non-trivial automorphism can be given by an element of order $\gcd(2^{3k}-1,(b-a))$.  Otherwise, $\gcd(2^k-1,(b-a)/(2^k-1)) \neq 1$ by assumption and, since $(2^k-1)(2^k+2^{2k}+1) = 2^{3k}-1$, also $\gcd(2^{3k}-1,(b-a)) \neq 1$. Therefore, a non-trivial automorphism in $\AutLE(F)$ always exists. \qed
\end{example}

\begin{example}[APN functions defined in Theorem 2 of~\cite{DBLP:journals/tit/BudaghyanCL08}]
    Let $s$ and $k$ be positive integers such that $s \leq 4k-1$, $\gcd(k,2)=gcd(s,2k)=1$, and $i=sk \mod 4$, $t=4-i$. Let further $a = 2^s+1$ and $b = 2^{ik}+2^{tk+s}$ and let $\omega = \alpha^{2^k-1}$ for a primitive element $\alpha \in \F_{2^{4k}}^*$. Then, the function $F\colon \F_{2^{4k}} \rightarrow \F_{2^{4k}}, x \mapsto x^a + \omega x^b$ is APN.
    
    We will show that $b-a \mod 5 = 0$. Then, since $(2^4-1)|(2^{4k}-1)$, a non-trivial automorphism can be given. Let $m\in \mathbb{N}$ be such that $i=4m+sk$. Indeed, the following equalities hold modulo $5$:
    \begin{align*}
        b-a &= 2^{ik}+2^{tk+s}-2^s-1 = 2^{(4m+sk)k} +2^{tk+s}-2^s-1 \\
        &= (2^{k^2})^s+2^{tk+s}-2^s-1 = 2^{tk+s}-1,
    \end{align*}
    where the last equality is fulfilled because $k$ is odd and thus, $2^{k^2} = 2 \mod 5$. It is left to show that $2^s2^{tk}= 1 \mod 5$. This can easily be obtained by considering the four different cases of $s \in \{1,3\} \mod 4$, $k \in \{1,3\} \mod 4$. In particular, we use the fact that $tk+s = (4-i)k+s$, which is equal to $(4-sk)k+s \mod 4$. \qed
\end{example}
    
\paragraph{Extension to Multinomial Functions} Proposition~\ref{prop:binom} can easily be generalized to multinomial functions as follows.
\begin{proposition}
\label{prop:multinom}
Let $F\colon \F_{2^n} \rightarrow \F_{2^n}, x \mapsto \sum_{i=0}^{k-1} \omega_i x^{a_i}$ with $\omega_i \in \F_{2^n}$ and $a_i \in \{0,\dots,2^n-1 \}$. Let $\alpha \in \F_{2^n}$ be an element of order $d$, such that, for all $i \in \{0,\dots,k-1\}$, $d|(a_i-a_0)$. Then, 
\[\left[\begin{array}{cc}\alpha & 0\\ 0 & \alpha^{a_0}\end{array}\right] \in \AutLE(F).\]
In particular, if $\gcd(a_1-a_0,a_2-a_0,\dots,a_{k-1}-a_0,2^n-1)\neq 1$, the group $\AutLE(F)$ is non-trivial.
\end{proposition}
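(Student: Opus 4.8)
The plan is to argue exactly as in the proof of Proposition~\ref{prop:binom}, only replacing the two-term identity by a term-by-term computation across the whole sum. First I would identify the two diagonal blocks of the claimed automorphism with field multiplications. Under the usual identification of $\F_{2^n}$ with $\F_2^n$, the maps $A\colon x \mapsto \alpha x$ and $B\colon x \mapsto \alpha^{a_0} x$ are $\F_2$-linear, since multiplication by a fixed field element distributes over addition, and they are bijective because $\alpha \neq 0$; hence both lie in $\GL(n,\F_2)$. By the equivalence recalled in the preliminaries, the matrix $\begin{bmatrix}\alpha & 0\\ 0 & \alpha^{a_0}\end{bmatrix}$ belongs to $\Aut(F)$ if and only if $F \circ A = B \circ F$, so it suffices to verify the identity $F(\alpha x) = \alpha^{a_0} F(x)$ for every $x \in \F_{2^n}$.

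To check this identity, I would simply expand and factor: $F(\alpha x) = \sum_{i=0}^{k-1} \omega_i (\alpha x)^{a_i} = \sum_{i=0}^{k-1} \omega_i \alpha^{a_i} x^{a_i} = \alpha^{a_0} \sum_{i=0}^{k-1} \omega_i \alpha^{a_i-a_0} x^{a_i}$. By hypothesis $\ord(\alpha) = d$ divides $a_i - a_0$ for every $i$, so $\alpha^{a_i - a_0} = 1$ for all $i$, and the sum collapses to $\alpha^{a_0}\sum_{i=0}^{k-1}\omega_i x^{a_i} = \alpha^{a_0} F(x)$. This establishes the first claim.

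For the ``in particular'' statement I would set $g = \gcd(a_1-a_0,\dots,a_{k-1}-a_0,2^n-1)$ and assume $g \neq 1$. Since $g \mid 2^n-1$ and $\F_{2^n}^*$ is cyclic of order $2^n-1$, there is an element $\alpha$ of order exactly $g$. Then $g \mid (a_i - a_0)$ for every $i$ (trivially for $i=0$), so $\alpha$ meets the hypothesis with $d=g$, and the first part produces a corresponding element of $\AutLE(F)$; it is non-trivial because $\ord(\alpha) = g > 1$ forces $A \neq I_n$. The only point that needs any care is that the divisibility $d \mid (a_i-a_0)$ is used uniformly to make every factor $\alpha^{a_i-a_0}$ equal to $1$, which is precisely what legitimizes pulling $\alpha^{a_0}$ out of the entire sum; beyond this bookkeeping the passage from the binomial to the multinomial case is purely formal, so I do not anticipate any genuine obstacle.
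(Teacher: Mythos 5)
Your proposal is correct and takes essentially the same route as the paper: the paper's proof is exactly the one-line computation $\sum_i \omega_i(\alpha x)^{a_i} = \alpha^{a_0}\sum_i \omega_i \alpha^{a_i-a_0}x^{a_i} = \alpha^{a_0}F(x)$, using $d \mid (a_i - a_0)$ to collapse each factor $\alpha^{a_i-a_0}$ to $1$. The additional details you supply (the $\F_2$-linearity and invertibility of multiplication by $\alpha$, and the existence of an element of order $g$ in the cyclic group $\F_{2^n}^*$ for the ``in particular'' clause) are routine points the paper leaves implicit.
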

\begin{proof}
We have 
    \[\sum_{i=0}^{k-1} \omega_i (\alpha x)^{a_i} = \sum_{i=0}^{k-1} \omega_i \alpha^{a_i} x^{a_i} = \alpha^{a_0} \sum_{i=0}^{k-1} \omega_i \alpha^{a_i-a_0} x^{a_i} = \alpha^{a_0} \sum_{i=0}^{k-1} \omega_i x^{a_i}.\]
\end{proof}

\begin{example}[APN functions defined in Theorem 1 of~\cite{DBLP:journals/ffa/BrackenBMM08}]
    Let $k$ and $s$ be odd integers with $\gcd(k,s) = 1$. Let $b \in \F_{2^{2k}}$ which is not a cube, $c \in \F_{2^{2k}} \setminus \F_{2^k}$, and, for $i \in \{1,\dots,k-1\}$, let $r_i \in \F_{2^k}$. Then, the function
    \[F\colon \F_{2^{2k}} \rightarrow \F_{2^{2k}}, \quad x \mapsto bx^{2^s+1} + b^{2^k}x^{2^{k+s}+2^k}+cx^{2^k+1}+\sum_{i=1}^{k-1}r_ix^{2^{i+k}+2^i}\]
    is APN.
    
    Recall that $2^{2k}-1 = 0 \mod 3$. To show that $F$ admits a non-trivial automorphism according to Proposition~\ref{prop:multinom}, we see that
    \renewcommand{\labelenumi}{(\roman{enumi})}
    \begin{enumerate}
        \item $2^{k+s}+2^k-2^s-1=0 \mod 3$, and
        \item for all $i \in \{0,1,\dots,k-1\}$, we have $2^{i+k}+2^i-2^s-1 = 0 \mod 3$. \qed
    \end{enumerate}
\end{example}

\begin{example}[APN functions defined in Theorem 2.1 of~\cite{DBLP:journals/ccds/BrackenBMM11}]
    Let $k$ and $s$ be positive integers such that $k+s = 0 \mod 3$ and $gcd(s,3k) = \gcd(3,k)=1$. Let further $u \in \F_{2^{3k}}^*$ be primitive and let $v,w \in \F_{2^k}$ with $vw \neq 1$. Then, the function 
    \[F\colon \F_{2^{3k}} \rightarrow \F_{2^{3k}}, \quad x \mapsto ux^{2^s+1}+u^{2^k}x^{2^{2k}+2^{k+s}}+vx^{2^{2k}+1}+wu^{2^k+1}x^{2^{k+s}+2^s}\]
    is APN.
    
    Recall that $2^{3k}-1 = 0 \mod 7$. To show that $F$ admits a non-trivial automorphism according to Proposition~\ref{prop:multinom}, we show that
    \renewcommand{\labelenumi}{(\roman{enumi})}
    \begin{enumerate}
        \item $2^{2k}+2^{k+s}-2^s-1 = 0 \mod 7$, and
        \item $2^{2k}+1-2^s-1 = (2^{k})^2-2^s = 0 \mod 7$, and
        \item $2^{k+s}+2^s-2^s-1 = 2^{k+s}-1 = 0 \mod 7$.
    \end{enumerate}
    Case $(iii)$ holds because $k+s=0 \mod 3$ and thus, $2^{k+s} = 1 \mod 7$. Case $(ii)$ can be deduced by considering the two cases of $k=1 \mod 3$ and $k = 2 \mod 3$ separately. Case $(i)$ immediately follows from $(ii)$ and $(iii)$. \qed
\end{example}
    
\subsection{Generalized Butterflies}
\label{sec:gen_butterfly}
It was shown in~\cite{DBLP:conf/crypto/PerrinUB16} that the sporadic APN permutation in dimension six (i.e., Dillon's permutation) can be decomposed into a special structure.
\begin{definition}[Generalized Butterfly~\cite{DBLP:journals/tit/CanteautDP17}]
Let $n \in \mathbb{N}$ be odd and let $R \colon \F_{2^n} \times \F_{2^n} \rightarrow \F_{2^n}$ be such that, for all fixed $y \in \F_{2^n}$, the mapping $x \mapsto R(x,y)$ is a permutation of $\F_{2^n}$. Then, an \emph{open generalized butterfly} is defined as a permutation 
\[\mathsf{H}_{R}\colon \F_{2^n} \times \F_{2^n} \rightarrow  \F_{2^n} \times \F_{2^n}, \quad (x,y) \mapsto \left(R(y,R^{-1}(x,y)),R^{-1}(x,y)\right).\]

If $R(x,y) = (x+\alpha y)^3 + \beta y^3$ for some $\alpha,\beta \in \F_{2^{n}}, \beta \neq 0$, $\mathsf{H}_R$ is called an \emph{open generalized butterfly with exponent 3} and it is denoted by $\mathsf{H}_{\alpha,\beta}$. 
\end{definition}

Dillon's permutation is affine equivalent to $\mathsf{H}_{\alpha,\beta}$ for  $n=3, \beta = 1$ and an $\alpha \in \F_{2^n}^*$ with $\tr(\alpha) = 0$. More generally, open generalized butterflies with exponent 3 are APN for $n=3, \alpha \neq 0$ with $\tr(\alpha)=0$ and $\beta \in \{\alpha^3+\alpha,\alpha^3+\alpha^{-1}\}$ and they are never APN for other odd values of $n$. Even when generalizing to Gold-like exponents, i.e., for $R(x,y) = (x+\alpha y)^{2^i+1}+ \beta y^{2^i+1}$ with $\gcd(i,n)=1$, the open generalized butterflies $\mathsf{H}_R$ are never APN for odd values of $n \neq 3$, see~\cite{butterfly_not_apn}.

Let $\zeta$ denote a non-zero element of the finite field $\F_{2^n}$. For 
\[A = \left[\begin{array}{cc}
    {\zeta^3} & 0 \\
     0 & {\zeta}
\end{array}\right],\]
we have $\mathsf{H}_{\alpha,\beta} \circ A = A \circ \mathsf{H}_{\alpha,\beta}$ for any $\alpha,\beta \in \F_{2^{n}}, \beta \neq 0$. Thus, there always exists a non-trivial element in $\AutLE(\mathsf{H}_{\alpha,\beta})$. 

For $n=3$, it is easy to verify that all matrices $A$ of the structure above for $\zeta \neq 1$ are similar to $\companion(X^6+X^5+X^4+X^3+X^2+X+1)$. This corresponds to Class 5 of the possible LE-automorphisms for 6-bit permutations (see Corollary~\ref{cor:classes}), which we deduce later.

\subsection{Known APN Functions in Small Dimension}
Up to dimension $n=5$, all APN functions are CCZ-equivalent to monomial functions, see~\cite{DBLP:journals/dcc/BrinkmannL08}. 

In~\cite{DBLP:journals/amco/EdelP09}, for dimensions $n \in \{6,7,8\}$, Edel and Pott constructed new APN functions up to CCZ-equivalence from the APN functions known at that time (see~\cite{banff}) by the so-called ``switching construction''. In particular, they listed 14 CCZ-inequivalent APN functions in dimension six, 19 in dimension seven, and 23 in dimension eight. All but one of them (see~\cite[Theorem 11]{DBLP:journals/amco/EdelP09}) are either monomial or quadratic functions. For the function inequivalent to monomial functions and quadratic functions, the authors of~\cite{DBLP:journals/amco/EdelP09} computed the order of the automorphism group,\footnote{In~\cite{DBLP:journals/amco/EdelP09}, this notion corresponds to the multiplier group $\mathcal{M}(G_F)$ of the development of $G_F$.} which is 8. Note that this is the only known example of an APN function that is not CCZ-equivalent to either a monomial function or a quadratic function. This function was discovered independently by Brinkmann and Leander in~\cite{DBLP:journals/dcc/BrinkmannL08}.  Note that in the recent works~\cite{cryptoeprint:2020:295,9000901}, the authors have found a new class of quadratic APN functions which lead to new APN functions in dimension 9. In~\cite{cryptoeprint:2020:295}, it was shown that some of the APN functions from~\cite{DBLP:journals/amco/EdelP09} can be classified into an infinite class. The 10-bit APN functions defined in~\cite{DBLP:journals/tit/EdelKP06} have recently been classified into an infinite family as well~\cite{DBLP:journals/iacr/BudaghyanHK19}.

In~\cite{matrix_full,DBLP:journals/dcc/YuWL14}, the authors found 471 new CCZ-inequivalent quadratic APN functions in dimension seven, and 8157 new CCZ-inequivalent quadratic APN functions in dimension eight. In~\cite{weng2013quadratic}, the authors presented 10 new CCZ-inequivalent quadratic APN functions in dimension eight.\footnote{In~\cite{weng2013quadratic}, the authors also claimed to have found 285 CCZ-inequivalent APN functions in dimension seven. However, their computational results are not available anymore, so we do not know whether those functions contain new APN functions.}  In~\cite{yu:eprint}, two new quadratic shift-invariant APN functions in dimension 9 have been found. We also refer to the Tables in~\url{https://boolean.h.uib.no/mediawiki/index.php/Tables} for an up-to-date list of the known APN functions.

To the best of our knowledge, the functions mentioned above include all the sporadic APN functions known at the time of submission of this manuscript (i.e., March 2020). 

\subsection{APN Permutations}
One of the most interesting problems in this area concerns the construction of APN permutations. For instance, it is natural to apply a permutation when substituting $n$-bit strings by other $n$-bit strings, as it is usually done in symmetric cryptographic algorithms. Then, APN permutations offer the best resistance against differential cryptanalysis. Note that the property of being a permutation is not invariant under CCZ-equivalence. Actually, not many examples of APN permutations are known. To the best of our knowledge, the CCZ-equivalence classes which are, at the time of submission of this manuscript, known to contain APN permutations fall in one of the following three cases, where the first two cases define infinite families of functions, and the third one is a sporadic example which is not classified into an infinite family yet.\footnote{In the GitHub repository~\cite{Perrin_sboxu}, Perrin implements an algorithm that checks whether an APN function is CCZ-equivalent to a permutation. We tested all cases of APN functions that come from infinite classes in dimension 7 and 9 (see the list in~\cite{thesis_sun} and the new class from~\cite{cryptoeprint:2020:295} and~\cite{9000901}), all the cases for dimension 7 listed in~\cite{DBLP:journals/amco/EdelP09}, and the two functions in dimension 9 found in~\cite{yu:eprint}. Besides the monomial functions, none of them are CCZ-equivalent to a permutation.}

\begin{enumerate}
    \item The CCZ-equivalence classes represented by APN monomial functions for $n$ odd.
    \item The CCZ-equivalence classes represented by the quadratic functions $F\colon \F_{2^{3k}} \mapsto \F_{2^{3k}}, x \mapsto x^{2^s+1} + \omega x^{2^{ik}+2^{tk+s}}$, where $s,k$ are positive integers with $k$ odd, $\gcd(k,3)=\gcd(s,3k)=1$, $i=sk \mod 3$, $t=3-i$, and $\omega \in \F_{2^{3k}}^*$ with order $2^{2k}+2^k+1$ (Corollary 1 of~\cite{DBLP:journals/tit/BudaghyanCL08}).
    \item The CCZ-equivalence class of the quadratic (non-invertible) function $F\colon \F_{2^6} \mapsto \F_{2^6}, x \mapsto x^3 + \alpha x^{24} + x^{10}$, where $\alpha$ is an element in $\F_{2^6}^*$ with minimal polynomial $X^6 + X^4 + X^3 + X + 1$ (i.e., the CCZ-equivalence class of Dillon's permutation~\cite{browning2010apn}).
\end{enumerate}

The authors of~\cite{DBLP:journals/tit/BudaghyanCL08} showed that the functions in Class 2 are CCZ-inequivalent to Gold functions when $k \geq 4$. In~\cite{Yoshiara2016}, Yoshiara proved that if a quadratic APN function is CCZ-equivalent to a monomial function, it must be EA-equivalent to a Gold function. Therefore, the functions in Class 2 are CCZ-inequivalent to any monomial function when $k\geq 4$.

For each of the different CCZ-equivalence classes coming from the above cases, one can give a representative which is a permutation and admits a non-trivial LE-automorphism. Indeed, the monomial functions are shift-invariant and a non-trivial LE-automorphism can be given as described in Section~\ref{sec:shift-inv}. Class 2 defines a special case of Theorem 1 in~\cite{DBLP:journals/tit/BudaghyanCL08}, which was covered in Example~\ref{ex:cor1} above. Finally, Class 3 is covered by the generalized butterfly structure with exponent 3, described in Section~\ref{sec:gen_butterfly}.

\paragraph{A Conjecture on the Automorphisms of APN Functions and Permutations} For all of the known APN functions, the automorphism group is non-trivial (recall that this is clear for functions CCZ-equivalent to quadratic and monomial functions and that the only known APN function which is not CCZ-equivalent to either a quadratic or a monomial function has an automorphism group of order 8). Moreover, for all known APN permutations, a CCZ-equivalent permutation $G$ can be given with $|\AutLE(G)| > 1$. Therefore, we formulate the following conjecture.
\begin{conjecture}
\label{conj}
For an APN function $F\colon \F_2^n \rightarrow \F_2^n$, we have $|\Aut(F)| > 1$. Moreover, if $F$ is an APN permutation, there exists a CCZ-equivalent permutation $G$ with $|\AutLE(G)| > 1$.
\end{conjecture}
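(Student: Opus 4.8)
The final statement is a \emph{conjecture}, so I do not expect a full proof to exist; the realistic plan is to isolate the cases in which the claim is already forced by known structure and then to pinpoint exactly why the remaining cases resist a uniform argument. I would treat the two assertions separately.

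For the first assertion, that $|\Aut(F)|>1$ for every APN $F$, I would split according to the CCZ-class of $F$. If $F$ is CCZ-equivalent to a quadratic function $Q$, then Proposition~\ref{prop:quadratic} gives $|\Aut(F)|=|\Aut(Q)|\geq 2^n>1$ via the translation-type automorphisms $\sigma_\alpha$, using that $\Aut$ is a CCZ-invariant. If $F$ is CCZ-equivalent to a monomial $x\mapsto x^d$, the field multiplications $x\mapsto\alpha x$ already supply non-trivial elements of $\AutLE$, hence of $\Aut$. The only remaining \emph{known} function is the sporadic order-$8$ example (Edel--Pott, independently Brinkmann--Leander), whose automorphism group is verified directly. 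Thus the genuine content of the first assertion is the claim that no APN function lying outside these families can have trivial automorphism group.

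To attack that residual case I would pass to the associated linear code $C_F$ (equivalently, the combinatorial design built from $\Gamma_F$), since $\Aut(F)$ coincides with the relevant automorphism group of $C_F$ by the dictionary recalled at the beginning of this section. The goal would be a rigidity statement: the APN condition forces the difference structure of $F$ to be highly regular, and one would try to convert this regularity into a forced non-trivial symmetry of $C_F$. Concretely, I would look for an averaging or counting argument showing that an APN function with trivial $\Aut$ would violate a divisibility or weight-distribution constraint on $C_F$. I emphasize that this is a direction rather than a technique with known traction.

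For the second assertion, I would first verify it on all current examples: for each of the three known cases of APN permutations (Cases~1--3 above) one exhibits a CCZ-representative with non-trivial $\AutLE$, namely shift-invariance for the monomials (Section~\ref{sec:shift-inv}), the field-multiplication automorphism for Class~2 (Example~\ref{ex:cor1}), and the butterfly diagonal $\mathrm{diag}(\zeta^3,\zeta)$ for Dillon's permutation (Section~\ref{sec:gen_butterfly}). The hard part — and precisely why the statement remains a conjecture — is the passage from a \emph{non-trivial automorphism} to a \emph{non-trivial linear self-equivalence}: one must locate, inside the CCZ-class, a representative whose symmetry has the block-diagonal linear form $\left[\begin{array}{cc}A & 0\\ 0 & B\end{array}\right]$ with $A,B\in\GL(n,\F_2)$, and there is no known general mechanism that manufactures such a representative from an arbitrary affine automorphism of $\Gamma_F$. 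I expect any genuine proof to hinge on a new structural link between differential uniformity $2$ and forced symmetry, which is exactly the ingredient currently missing.
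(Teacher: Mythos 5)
Your proposal is correct and matches the paper's treatment: the statement is indeed only a conjecture there, supported exactly by the evidence you assemble --- Proposition~\ref{prop:quadratic} for CCZ-quadratic functions, multiplicative/shift-invariant $\AutLE$-elements for monomials, the verified order-$8$ automorphism group of the single sporadic non-quadratic non-monomial example, and explicit LE-automorphic representatives for the three known permutation classes (shift-invariance, Example~\ref{ex:cor1}, and the butterfly diagonal). Your correct identification of the missing ingredient --- no known mechanism converts a general affine automorphism of $\Gamma_F$ into a block-diagonal linear one within the CCZ-class --- is precisely the open question the paper itself flags.
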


In the spirit of this conjecture, we are going to search for APN permutations with non-trivial LE-automorphisms in small dimensions. In the remainder of this paper, we describe our method for carrying out this search.

\begin{remark}
Intuitively, the property of having a non-trivial automorphism group should be quite rare among all the $n$-bit to $n$-bit functions for a fixed dimension $n \geq 6$. We took 10,000 functions from $\F_2^6$ to $\F_2^6$ sampled uniformly at random and computed their automorphism groups. All of them were trivial. 
\end{remark}

\begin{remark}
The size of the group of LE-automorphisms is invariant under linear equivalence, but not under affine equivalence. In particular if $|\AutLE(F)| > 1$ for a function $F\colon \F_2^n \rightarrow \F_2^n$, then there might exist a constant $c \in \F_2^n$ such that $\AutLE(F+c)$ is trivial. For example, this is the case for the 6-bit APN permutation found by Algorithm~\ref{alg:search} (see Section~\ref{sec:res_6}).

More precisely, if $F(0) = 0$, one can show that $\AutLE(F+c)$ is a subgroup of $\AutLE(F)$ given by
\[ \AutLE(F+c) = \left\{ \sigma = \left[\begin{array}{cc}A & 0\\ 0 & B\end{array}\right] \mid \sigma \in \AutLE(F), Bc = c\right\}.\]
\end{remark}

\begin{remark}
One can further ask whether for any APN \emph{function}, there exists a representative in its CCZ-equivalence class which admits a non-trivial LE-automorphism. We checked that this is the case for any APN function in dimension $n \leq 5$. 
\end{remark}

\section{Equivalences for Permutations with Non-Trivial LE-Auto\-morphisms}
If we want to classify all $n$-bit permutations $F$ (up to CCZ-equivalence) with non-trivial elements \[\left[\begin{array}{cc}A & 0\\ 0 & B\end{array}\right] \in \AutLE(F),\] we can significantly reduce the number of tuples $(B,A)$ to consider. The observations in this section result in Corollary~\ref{cor:classes}, which states that for $n=6,7,8$ we only need to consider $17,27$, and $32$ tuples $(B,A)$, respectively. Note that this classification holds for \emph{any} permutation with a non trivial LE-automorphism, not only for APN permutations. 

Let $F \circ A = B \circ F $ for a \emph{function} $F \colon \F_2^n \rightarrow \F_2^n$ and $A,B \in \GL(n,\F_2)$. For a permutation $P$ on $\F_2^n$, we define the sets of points of order divisible by $i$ as $\Ord(P,i) \coloneqq \{x \in \F_2^n \mid P^i(x) = x\}$, which is a subspace of $\F_2^n$ if $P$ is linear.
\begin{lemma}
\label{lem:order}
Let $F \colon \F_2^n \rightarrow \F_2^n$ and let $A,B$ be permutations on $\F_2^n$ such that $F \circ A = B \circ F$. Then, for all $i \in \mathbb{N}$, \[x \in \Ord(A,i) \quad \Rightarrow \quad F(x) \in \Ord(B,i).\]
Moreover, if $F$ is a permutation, the above implication is an equivalence.\end{lemma}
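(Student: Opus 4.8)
The plan is to reduce everything to a single intertwining identity, namely that $F \circ A^i = B^i \circ F$ holds for every $i \in \mathbb{N}$, and then to read both directions of the statement off from it. First I would establish this identity by induction on $i$. The base case $i = 1$ is precisely the hypothesis $F \circ A = B \circ F$. For the inductive step, assuming $F \circ A^i = B^i \circ F$, I would compute
\[
F \circ A^{i+1} = (F \circ A^i) \circ A = (B^i \circ F) \circ A = B^i \circ (F \circ A) = B^{i+1} \circ F,
\]
where the second equality uses the induction hypothesis and the last uses the original relation $F \circ A = B \circ F$. Only associativity of composition is needed here.

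Given this identity, the forward implication is immediate. If $x \in \Ord(A,i)$, then $A^i(x) = x$, and therefore $B^i(F(x)) = F(A^i(x)) = F(x)$, which is exactly the statement $F(x) \in \Ord(B,i)$. I would emphasize that this direction does not require $F$ to be a permutation.

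For the converse, the decisive observation is that when $F$ is invertible the hypothesis can be rewritten as $F^{-1} \circ B = A \circ F^{-1}$, which is an intertwining relation of the very same shape, but with $F^{-1}$ in the role of $F$ and with the roles of $A$ and $B$ interchanged. I would therefore apply the forward implication already proved, now to the triple $(F^{-1}, B, A)$, obtaining $y \in \Ord(B,i) \Rightarrow F^{-1}(y) \in \Ord(A,i)$. Setting $y = F(x)$ and using $F^{-1}(F(x)) = x$ then yields $F(x) \in \Ord(B,i) \Rightarrow x \in \Ord(A,i)$, which combined with the forward direction gives the claimed equivalence. I do not anticipate any genuine obstacle: the only points requiring care are that $A$ and $B$ are assumed to be permutations (legitimizing the powers $A^i, B^i$ and the symmetric rewriting via $F^{-1}$) and that the induction is carried out over the correct index set, including the trivial case $i=0$ if one wishes to state the identity for all non-negative integers.
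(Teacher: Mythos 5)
Your proposal is correct and follows essentially the same route as the paper: both rest on the observation that $F \circ A^i = B^i \circ F$ for all $i \in \mathbb{N}$ (which the paper states without the explicit induction you spell out), and both get the forward implication immediately from it. For the converse the paper simply notes that $F(x) \in \Ord(B,i)$ gives $F(x) = F(A^i(x))$ and invokes injectivity of $F$; your rewriting via $F^{-1} \circ B = A \circ F^{-1}$ is just this same injectivity step packaged symmetrically, so the two arguments coincide in substance.
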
 
\begin{proof}
Observe that, for all $i \in \mathbb{N}$, $F \circ A^i = B^i \circ F$. If $x \in \Ord(A,i)$, then $F(x) = B^i(F(x))$, thus $F(x) \in \Ord(B,i)$. Let on the other hand $F(x) \in \Ord(B,i)$. Then, $F(x) = F(A^i (x))$. Thus, $x =A^i(x)$ if $F$ is a permutation.
\end{proof}

We only need to consider $A$ and $B$ of prime order, as shown in the following.
\begin{lemma}
Let $F \colon \F_2^n \rightarrow \F_2^n$ for which there exists a non-trivial automorphism in $\AutLE(F)$. Then, $F \circ A= B \circ F$ with $A,B \in \GL(n,\F_2)$ such that either 
\begin{enumerate}
    \item $\ord(A)=\ord(B)=p$ for $p$ prime, or
    \item $A = I_n$ and $\ord(B) = p$ for $p$ prime, or
    \item $B = I_n$ and $\ord(A) = p$ for $p$ prime.
\end{enumerate}
If $F$ is a permutation, the first of the above conditions must hold.
\end{lemma}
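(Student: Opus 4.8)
The plan is to exploit that the LE-automorphisms of $F$ form a group, so that from a single non-trivial element we can extract one of prime order. Let $\sigma \in \AutLE(F)$ be the given non-trivial element, with blocks $A, B \in \GL(n, \F_2)$, so that $F \circ A = B \circ F$. A straightforward induction shows $F \circ A^i = B^i \circ F$ for every $i \in \mathbb{N}$, which means that the block-diagonal matrix with blocks $A^i$ and $B^i$, i.e.\ $\sigma^i$, lies in $\AutLE(F)$. Thus the whole cyclic group generated by $\sigma$ is contained in $\AutLE(F)$.

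Next I would pass to prime order. Let $m = \ord(\sigma) = \lcm(\ord(A), \ord(B))$, which is finite and satisfies $m > 1$ since $\sigma$ is non-trivial. Choosing any prime $p$ dividing $m$, the power $\sigma^{m/p}$ has order exactly $p$ and still lies in $\AutLE(F)$. Writing $A' = A^{m/p}$ and $B' = B^{m/p}$, we obtain $F \circ A' = B' \circ F$ with $\lcm(\ord(A'), \ord(B')) = p$. Since $p$ is prime, each of $\ord(A')$ and $\ord(B')$ divides $p$ and hence equals $1$ or $p$; and they cannot both equal $1$, as that would make $\sigma^{m/p}$ trivial. The three surviving combinations are precisely the three listed cases, where order $1$ means the corresponding matrix equals $I_n$.

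For the statement about permutations, I would invoke Lemma~\ref{lem:order}. When $F$ is a permutation, that lemma yields the equivalence $x \in \Ord(A', i) \Leftrightarrow F(x) \in \Ord(B', i)$, so $F$ restricts to a bijection between $\Ord(A', i)$ and $\Ord(B', i)$, giving $|\Ord(A', i)| = |\Ord(B', i)|$ for every $i \in \mathbb{N}$. Now $A'^i = I_n$ is equivalent to $\Ord(A', i) = \F_2^n$, which by this cardinality equality is equivalent to $\Ord(B', i) = \F_2^n$ and hence to $B'^i = I_n$. Taking the smallest such $i$ forces $\ord(A') = \ord(B')$, ruling out the two asymmetric cases and leaving only $\ord(A') = \ord(B') = p$.

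The argument is essentially routine group theory once the ``powers stay in the group'' observation is in place; the only genuine content lies in the permutation case, whose key ingredient---the order-matching of $A'$ and $B'$---is exactly what Lemma~\ref{lem:order} provides. I expect the main (and still modest) obstacle to be phrasing this order-matching cleanly, namely deducing $\ord(A') = \ord(B')$ from the equality of the cardinalities $|\Ord(A', i)|$ across all $i$ rather than from some weaker coincidence of fixed-point counts.
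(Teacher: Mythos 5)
Your proof is correct and takes essentially the same approach as the paper's: both extract a prime-order element from the cyclic subgroup generated by the given automorphism (you construct it explicitly as $\sigma^{m/p}$, the paper invokes the fundamental theorem of cyclic groups), split into the three cases via $\ord(\sigma)=\lcm(\ord(A),\ord(B))$, and appeal to Lemma~\ref{lem:order} to force $\ord(A')=\ord(B')$ in the permutation case. Your cardinality argument $|\Ord(A',i)|=|\Ord(B',i)|$ merely spells out the step the paper leaves implicit in its one-line citation of that lemma.
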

\begin{proof}
Let $g \in \AutLE(F)$, $g \neq I_n$. We consider the cyclic subgroup $\langle g \rangle \subseteq \AutLE(F)$. From the fundamental theorem of cyclic groups, it contains a cyclic subgroup of prime order. Let this subgroup be generated by
\[h = \left[\begin{array}{cc}A & 0\\ 0 & B\end{array}\right].\]
The result follows since $\ord(h) = \lcm(\ord(A),\ord(B))$. If $F$ is a permutation, then $\ord(A) = \ord(B)$ because of Lemma~\ref{lem:order}.
\end{proof}

Two matrices $M,M' \in \GL(n,\F_2)$ are called \emph{similar}, denoted $M \sim M'$, if there exists a matrix $P \in \GL(n,\F_2)$ such that $M' = P^{-1} M P$. It is obvious that similarity defines an equivalence relation. Moreover, we can provide a representative of each equivalence class as follows.

\begin{lemma}{(Rational Canonical Form)\cite[Page~476]{dummit1991}}
\label{lem:rcf}
Every matrix $M \in \GL(n,\F_2)$ is similar to a unique matrix $M' \in \GL(n,\F_2)$ of the form
\[ M' = \left[\begin{array}{cccc}
\companion(q_r) & & & \\
& \companion(q_{r-1}) & & \\
& & \ddots & \\
& & & \companion(q_1)
\end{array}\right]\]
for polynomials $q_i$ such that $q_r \mid q_{r-1} \mid \dots \mid q_1$. The matrix $M'$ in the form above is called the \emph{rational canonical form} of $M$, denoted $\rcf(M)$.
\end{lemma}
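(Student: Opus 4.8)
The plan is to recognize this as the classical rational canonical form theorem and to derive it from the structure theorem for finitely generated modules over a principal ideal domain. First I would turn the vector space $\F_2^n$ into a module over the polynomial ring $\F_2[X]$ by letting the indeterminate $X$ act as the matrix $M$; that is, for $p \in \F_2[X]$ and $v \in \F_2^n$ I set $p(X) \cdot v \coloneqq p(M)v$, and I call the resulting module $V_M$. Since $\F_2[X]$ is a PID and $V_M$ is finitely generated (indeed finite), and since every element is annihilated by the characteristic polynomial of $M$ (so that $V_M$ is a torsion module), the structure theorem applies and yields a decomposition
\[ V_M \cong \bigoplus_{i=1}^r \F_2[X]/(q_i), \qquad q_r \mid q_{r-1} \mid \dots \mid q_1, \]
into cyclic modules whose invariant factors $q_i$ are uniquely determined by $V_M$ up to units; and over $\F_2$ the only unit is $1$, so the $q_i$ are genuinely unique.

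The second step is the dictionary between module isomorphism and matrix similarity: two matrices $M, M' \in \GL(n,\F_2)$ are similar precisely when the associated modules $V_M$ and $V_{M'}$ are isomorphic as $\F_2[X]$-modules, because a change of basis $P$ intertwining $M$ and $M'$ is exactly an $\F_2[X]$-module isomorphism. Combined with the uniqueness of the invariant factors, this delivers both existence and uniqueness of the canonical form. To read off the matrix itself, I would observe that on a single cyclic factor $\F_2[X]/(q_i)$ of degree $d_i = \deg q_i$, multiplication by $X$ in the ordered basis $1, X, X^2, \dots, X^{d_i - 1}$ is represented exactly by $\companion(q_i)$. Choosing such a basis on every summand therefore represents $M$ by the block-diagonal matrix $M'$ in the statement.

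Finally I would handle the invertibility condition. The matrix $M$ lies in $\GL(n,\F_2)$ iff $X$ does not divide its characteristic (equivalently minimal) polynomial, which happens iff $X \nmid q_1$, and hence iff every $q_i$ has nonzero constant term. Over $\F_2$ this means $q_i(0) = 1$, which by the remark following the definition of $\companion$ is exactly the condition that each $\companion(q_i)$ is invertible, so $M' \in \GL(n,\F_2)$ as claimed. The only genuinely substantive input here is the structure theorem over a PID itself, and in particular the uniqueness of the invariant factors; everything else is the routine translation between the linear-algebra and module-theoretic pictures, which is why this result can simply be cited from~\cite{dummit1991}.
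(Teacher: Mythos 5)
Your proposal is correct and is essentially the proof the paper implicitly relies on: the lemma is cited from Dummit--Foote, where it is established exactly via the structure theorem for finitely generated torsion modules over the PID $\F_2[X]$, the similarity/module-isomorphism dictionary, and companion matrices on cyclic factors. Your additional observation that invertibility of $M$ forces $q_i(0)=1$ for all $i$ correctly accounts for the restriction to $\GL(n,\F_2)$ in the statement.
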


If $A' \sim A$, $B' \sim B$, and $F \circ A= B \circ F$, there exists a function $G$ which is linear equivalent to $F$ and for which $G \circ A' = B' \circ G$. Therefore, if we are only interested in $F$ up to linear equivalence, it is sufficient to consider $A$ and $B$ in rational canonical form. 

We can reduce the search space further if we use the fact that all powers of automorphisms are also automorphisms. Based on this fact, we consider the following equivalence classes for matrices of prime order.
\begin{definition}
Let $A,B,C,D \in \GL(n,\F_2)$ be of order $p$ for $p$ prime. The tuple $(A,B)$ is said to be \emph{power-similar} to the tuple $(C,D)$, denoted $(A,B) \sim_p (C,D)$, if there exists $i \in \mathbb{N}$ such that $A \sim C^i$ and $B \sim D^i$. The tuple $(A,B)$ is said to be \emph{extended power-similar} to $(C,D)$, denoted $(A,B) \sim_{Ep} (C,D)$, if one of the two following conditions holds:
\begin{enumerate}
    \item $(A,B) \sim_p (C,D)$
    \item $(A^{-1},B^{-1}) \sim_p (D,C)$.
\end{enumerate}
\end{definition}
Power-similarity and extended power-similarity define equivalence relations on the tuples of matrices in $\GL(n,\F_2^n)$ of the same prime order. Therefore, we can deduce the following lemma.

\begin{lemma}
Let $F \colon \F_2^n \rightarrow \F_2^n$ with an automorphism \[\left[\begin{array}{cc}A & 0\\ 0 & B\end{array}\right] \in \AutLE(F)\] for $A, B\in \GL(n,\F_2)$ being of prime order $p$. For every $(\widetilde{B},\widetilde{A})$ power-similar to $(B,A)$, there is a function $G$ linear-equivalent to $F$ such that \begin{equation}\label{eq:rcf_form}\left[\begin{array}{cc}\rcf(\widetilde{A}) & 0\\ 0 & \rcf(\widetilde{B})\end{array}\right] \in \AutLE(G).\end{equation} 

Moreover, if $F$ is a permutation, then for every tuple $(\widetilde{B},\widetilde{A})$ extended power-similar to $(B,A)$, there is such a function $G$ fulfilling Equation~(\ref{eq:rcf_form}) and being linear equivalent to either $F$ or $F^{-1}$.
\end{lemma}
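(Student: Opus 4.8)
The plan is to assemble the statement from three elementary closure properties of $\AutLE$: stability under taking powers, transport under linear equivalence, and behaviour under inversion of $F$. Throughout I fix the convention that a pair $(M,N)$ with $F \circ M = N \circ F$ has $M$ acting as the inner (pre-composition) factor and $N$ as the outer (post-composition) factor, since the similarity relations in the statement are written with the outer factor first, as in $(\widetilde{B}, \widetilde{A})$. The first property is immediate: since $\AutLE(F)$ is a group, $F \circ A = B \circ F$ forces $F \circ A^i = B^i \circ F$ for every $i$ (exactly the identity used in the proof of Lemma~\ref{lem:order}), so every power $(A^i, B^i)$ is again realised.

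Second, I would record how automorphisms move under linear equivalence. If $G = Q \circ F \circ P$ with $P, Q \in \GL(n,\F_2)$, then a direct computation using $F \circ A = B \circ F$ gives $G \circ (P^{-1} A P) = (Q B Q^{-1}) \circ G$. Hence $G$ realises the pair $(P^{-1} A P,\, Q B Q^{-1})$, and because $P$ and $Q$ vary independently over $\GL(n,\F_2)$, I can realise any pair $(A', B')$ with $A' \sim A$ and $B' \sim B$ on a suitable linear-equivalent $G$; in particular the choice $A' = \rcf(A)$, $B' = \rcf(B)$ is available. These two facts already give the first claim: if $(\widetilde{B}, \widetilde{A}) \sim_p (B,A)$, then by definition there is an $i$ with $\widetilde{A} \sim A^i$ and $\widetilde{B} \sim B^i$; the power property supplies the automorphism $(A^i, B^i)$ of $F$, and since $\rcf$ is an invariant of the similarity class we have $\rcf(A^i) = \rcf(\widetilde{A})$ and $\rcf(B^i) = \rcf(\widetilde{B})$, so the linear-equivalence property produces a $G$ linear equivalent to $F$ satisfying Equation~(\ref{eq:rcf_form}).

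For the ``moreover'' part the extra input is inversion. When $F$ is a permutation, composing $F \circ A = B \circ F$ with $F^{-1}$ on each side yields $F^{-1} \circ B = A \circ F^{-1}$, so the pair $(B,A)$ is an LE-automorphism of $F^{-1}$. The second clause of extended power-similarity provides an $i$ with $\widetilde{B}^{-1} \sim A^i$ and $\widetilde{A}^{-1} \sim B^i$, which is precisely a power-similarity between $(\widetilde{B}^{-1}, \widetilde{A}^{-1})$ and the automorphism $(B,A)$ of $F^{-1}$. Applying the already-proved first claim to $F^{-1}$ therefore yields a $G$ linear equivalent to $F^{-1}$ realising $(\rcf(\widetilde{A}^{-1}), \rcf(\widetilde{B}^{-1}))$. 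To replace the inverses by $\widetilde{A}, \widetilde{B}$ themselves I would use that a matrix of prime order $p$ satisfies $M^{-1} = M^{p-1}$: raising this automorphism of $G$ to the $(p-1)$-th power produces factors similar to $\widetilde{A}$ and $\widetilde{B}$, and one last application of the linear-equivalence property brings them to rational canonical form, giving a $G'$ linear equivalent to $F^{-1}$ that satisfies Equation~(\ref{eq:rcf_form}). Together with the first claim this exhibits the required $G$ as linear equivalent to either $F$ or $F^{-1}$.

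I expect the only real hazard to be bookkeeping rather than a genuine mathematical obstacle: the inner factor is conjugated by $P^{-1}(\cdot)P$ while the outer factor is conjugated by $Q(\cdot)Q^{-1}$, and the tuples are consistently written outer-first, so I would pin down this ordering convention at the very start to avoid direction errors when chaining the power, conjugation, and inversion reductions. The $(p-1)$-th power trick is the one slightly delicate point, as it lets me stay entirely within the established framework and sidestep any separate claim that $\rcf(M^{-1}) = \rcf(M)$.
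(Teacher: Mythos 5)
Your proof is correct and follows essentially the same route as the paper: both rest on the identities $F \circ A^i = B^i \circ F$ and $G \circ (P^{-1}AP) = (QBQ^{-1}) \circ G$ for $G = Q \circ F \circ P$, plus passing to $F^{-1} \circ B = A \circ F^{-1}$ for the permutation case. The only cosmetic difference is that the paper normalizes the exponent via $k = i^{-1} \bmod p$ so as to realise $(\widetilde{A},\widetilde{B})$ exactly before choosing representatives in rational canonical form, whereas you invoke the similarity-invariance of $\rcf$ directly and absorb the inverses in the second part through the $(p-1)$-th power identity $M^{-1} = M^{p-1}$; these are routine reshufflings of the same ingredients.
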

\begin{proof}
Let $A = P^{-1} \widetilde{A}^i P$ and let $B = Q^{-1} \widetilde{B}^i Q$. We have that \[F \circ A = B \circ F \quad \Leftrightarrow \quad F \circ P^{-1} \widetilde{A}^i P = Q^{-1} \widetilde{B}^i Q \circ F \] and, thus, for $G \coloneqq Q \circ F \circ P^{-1}$, we have $G \circ \widetilde{A}^i = \widetilde{B}^i \circ G$. Let $k = i^{-1} \mod p$. Then,
\begin{align*}
    G \circ \widetilde{A}^{ik} = \widetilde{B}^{ik} \circ G \quad \Leftrightarrow \quad  G \circ \widetilde{A} = \widetilde{B} \circ G.
\end{align*}
Without loss of generality, $\widetilde{A}$ and $\widetilde{B}$ can be chosen up to similarity. If we chose them in rational canonical form, we obtain the first part of the lemma. 

The second part can be obtained by the same argument and using the fact that, for a permutation $F$, we have $F^{-1} \circ B^{-1} = A^{-1} \circ F^{-1}$.
\end{proof}

Thus, if we want to consider all permutations  with a non-trivial linear self-equivalence up to CCZ-equivalence (since $F$ is CCZ-equivalent to $F^{-1}$), we can restrict to a single tuple $(B,A)$ from each equivalence class under extended power-similarity.

Therefore, combining all the lemmas established in this section, we can enumerate the tuples we need to consider for $n \in \{6,7,8\}$ as follows. The code for generating all those tuples can be found in~\cite{cbe90}. 
\begin{corollary}
\label{cor:classes}
Let $n \in \{6,7,8\}$. All linear-equivalence classes of permutations $F \colon \F_2^n \rightarrow \F_2^n$ or $F^{-1}$ with a non-trivial automorphism 
\[\left[\begin{array}{cc}A & 0\\ 0 & B\end{array}\right] \in \AutLE(F)\]
can be obtained by considering the following classes for tuples $(B,A)$:

For $n=6$, we have the 17 classes
\begin{enumerate}
    \item $B = \companion(X^6+X^5+X^4+X^3+X+1) \quad A = \companion(X^6+X^3+X^2+1)$
    \item $B = \companion(X^6+X^5+X^4+X^3+X+1) \quad A = \companion(X^6+X^5+X^3+X^2+X+1)$
    \item $B = \companion(X^6+X^5+X^4+X^3+X+1) \quad A = \companion(X^6+X^5+X^4+1)$
    \item $B = A = \companion(X^6+X^5+X^4+X^3+X+1)$
    \item $B = A = \companion(X^6+X^5+X^4+X^3+X^2+X+1)$
    \item $B = A = I_1 \oplus \companion(X^5+1)$
    \item $B = I_2 \oplus \companion(X^4+X^3+X^2+1) \quad A = I_2 \oplus \companion(X^4+X^2+X+1)$
    \item $B = A = I_2 \oplus \companion(X^4+X^3+X^2+1)$
    \item $B = A = \companion(X^3+1) \oplus \companion(X^3+1)$
    \item $B = \companion(X^3+X^2+1) \oplus \companion(X^3+X^2+1) \quad A = \companion(X^6+X^5+X^4+X^3+X^2+X+1)$
    \item $B = \companion(X^3+X^2+1) \oplus \companion(X^3+X^2+1) \quad A = \companion(X^3+X+1) \oplus \companion(X^3+X+1)$
    \item $B = A = \companion(X^3+X^2+1) \oplus \companion(X^3+X^2+1)$
    \item $B = A = I_3 \oplus \companion(X^3+1)$
    \item $B = A = \companion(X^2+1) \oplus \companion(X^2+1) \oplus \companion(X^2+1)$
    \item $B = A = \companion(X^2+X+1) \oplus \companion(X^2+X+1) \oplus \companion(X^2+X+1)$
    \item $B = A = I_2 \oplus \companion(X^2+1) \oplus \companion(X^2+1)$
    \item $B = A = I_4  \oplus \companion(X^2+1)$.
\end{enumerate}

For $n=7$, we have the 27 classes
\begin{enumerate}
    \item $B = A = \companion(X^7+1)$
    \item $B = \companion(X^7+X^6+X^5+X^4+X^3+X^2+1) \quad A = \companion(X^7+X^3+1)$
    \item $B = \companion(X^7+X^6+X^5+X^4+X^3+X^2+1) \quad A = \companion(X^7+X^5+X^3+X+1)$
    \item $B = \companion(X^7+X^6+X^5+X^4+X^3+X^2+1) \quad A = \companion(X^7+X^5+X^4+X^3+X^2+X+1)$
    \item $B = \companion(X^7+X^6+X^5+X^4+X^3+X^2+1) \quad A = \companion(X^7+X^6+1)$
    \item $B = \companion(X^7+X^6+X^5+X^4+X^3+X^2+1) \quad A = \companion(X^7+X^6+X^4+X+1)$
    \item $B = \companion(X^7+X^6+X^5+X^4+X^3+X^2+1) \quad A = \companion(X^7+X^6+X^5+X^2+1)$
    \item $B = \companion(X^7+X^6+X^5+X^4+X^3+X^2+1) \quad A = \companion(X^7+X^6+X^5+X^3+X^2+X+1)$
    \item $B = \companion(X^7+X^6+X^5+X^4+X^3+X^2+1) \quad A = \companion(X^7+X^6+X^5+X^4+1)$
    \item $B = \companion(X^7+X^6+X^5+X^4+X^3+X^2+1) \quad A = \companion(X^7+X^6+X^5+X^4+X^2+X+1)$
    \item $B = A = \companion(X^7+X^6+X^5+X^4+X^3+X^2+1)$
    \item $B = I_1 \oplus \companion(X^6+X^5+X^4+X^3+X+1) \quad A = I_1 \oplus \companion(X^6+X^3+X^2+1)$
    \item $B = I_1 \oplus \companion(X^6+X^5+X^4+X^3+X+1) \quad A = I_1 \oplus \companion(X^6+X^5+X^3+X^2+X+1)$
    \item $B = I_1 \oplus \companion(X^6+X^5+X^4+X^3+X+1) \quad A = I_1 \oplus \companion(X^6+X^5+X^4+1)$
    \item $B = A = I_1 \oplus \companion(X^6+X^5+X^4+X^3+X+1)$
    \item $B = A = I_2 \oplus \companion(X^5+1)$
    \item $B = \companion(X^3+X+1) \oplus \companion(X^4+X^3+X^2+1) \quad A = \companion(X^7+1)$
    \item $B = \companion(X^3+X+1) \oplus \companion(X^4+X^3+X^2+1) \quad A = \companion(X^3+X^2+1) \oplus \companion(X^4+X^2+X+1)$
    \item $B = A = \companion(X^3+X+1) \oplus \companion(X^4+X^3+X^2+1)$
    \item $B = I_3 \oplus \companion(X^4+X^3+X^2+1) \quad A = I_3 \oplus \companion(X^4+X^2+X+1)$
    \item $B = A = I_3 \oplus \companion(X^4+X^3+X^2+1)$
    \item $B = A = I_1 \oplus \companion(X^3+1) \oplus \companion(X^3+1)$
    \item $B = A = \companion(X^2+X+1) \oplus \companion(X^2+X+1) \oplus \companion(X^3+1)$
    \item $B = A = I_4 \oplus \companion(X^3+1)$
    \item $B = A = I_1 \oplus \companion(X^2+1) \oplus \companion(X^2+1) \oplus \companion(X^2+1)$
    \item $B = A = I_3 \oplus \companion(X^2+1) \oplus \companion(X^2+1)$
    \item $B = A = I_5 \oplus \companion(X^2+1)$.
\end{enumerate}

For $n=8$, we have the 32 classes 
\begin{enumerate}
    \item $B = \companion(X^8+X^7+X^6+X^4+X^2+X+1) \quad A = \companion(X^8+X^5+X^4+X^3+1)$
    \item $B = A = \companion(X^8+X^7+X^6+X^4+X^2+X+1)$
    \item $B = \companion(X^8+X^7+X^6+X^5+X^4+X^3+X^2+1) \quad A = \companion(X^8+X^6+X^4+X^3+X^2+1)$
    \item $B = \companion(X^8+X^7+X^6+X^5+X^4+X^3+X^2+1) \quad A = \companion(X^8+X^6+X^5+X^4+X^2+1)$
    \item $B = \companion(X^8+X^7+X^6+X^5+X^4+X^3+X^2+1) \quad A = \companion(X^8+X^6+X^5+X^4+X^3+X^2+X+1)$
    \item $B = \companion(X^8+X^7+X^6+X^5+X^4+X^3+X^2+1) \quad A = \companion(X^8+X^7+X^4+X^3+X+1)$
    \item $B = \companion(X^8+X^7+X^6+X^5+X^4+X^3+X^2+1) \quad A = \companion(X^8+X^7+X^5+X^2+X+1)$
    \item $B = \companion(X^8+X^7+X^6+X^5+X^4+X^3+X^2+1) \quad A = \companion(X^8+X^7+X^5+X^4+X+1)$
    \item $B = \companion(X^8+X^7+X^6+X^5+X^4+X^3+X^2+1) \quad A = \companion(X^8+X^7+X^6+1)$
    \item $B = \companion(X^8+X^7+X^6+X^5+X^4+X^3+X^2+1) \quad A = \companion(X^8+X^7+X^6+X^3+X+1)$
    \item $B = \companion(X^8+X^7+X^6+X^5+X^4+X^3+X^2+1) \quad A = \companion(X^8+X^7+X^6+X^5+X^3+1)$
    \item $B = A = \companion(X^8+X^7+X^6+X^5+X^4+X^3+X^2+1)$
    \item $B = A = I_1 \oplus \companion(X^7+1)$
    \item $B = I_2 \oplus \companion(X^6+X^5+X^4+X^3+X+1) \quad A = I_2 \oplus \companion(X^6+X^3+X^2+1)$
    \item $B = I_2 \oplus \companion(X^6+X^5+X^4+X^3+X+1) \quad A = I_2 \oplus \companion(X^6+X^5+X^3+X^2+X+1)$
    \item $B = I_2 \oplus \companion(X^6+X^5+X^4+X^3+X+1) \quad A = I_2 \oplus \companion(X^6+X^5+X^4+1)$
    \item $B = A = I_2 \oplus \companion(X^6+X^5+X^4+X^3+X+1)$
    \item $B = A = I_3 \oplus \companion(X^5+1)$
    \item $B = \companion(X^4+X^3+X^2+1) \oplus \companion(X^4+X^3+X^2+1) \quad A = I_1 \oplus \companion(X^7+1)$
    \item $B = \companion(X^4+X^3+X^2+1) \oplus \companion(X^4+X^3+X^2+1) \quad A = \companion(X^4+X^2+X+1) \oplus \companion(X^4+X^2+X+1)$
    \item $B = A = \companion(X^4+X^3+X^2+1) \oplus \companion(X^4+X^3+X^2+1)$
    \item $B = A = \companion(X^4+X^3+X^2+X+1) \oplus \companion(X^4+X^3+X^2+X+1)$
    \item $B = I_4 \oplus \companion(X^4+X^3+X^2+1) \quad A = I_4 \oplus \companion(X^4+X^2+X+1)$
    \item $B = A = I_4 \oplus \companion(X^4+X^3+X^2+1)$
    \item $B = A = \companion(X^2+X+1) \oplus \companion(X^3+1) \oplus \companion(X^3+1)$
    \item $B = A = I_2 \oplus \companion(X^3+1) \oplus \companion(X^3+1)$
    \item $B = A = I_5 \oplus \companion(X^3+1)$
    \item $B = A = \companion(X^2+1) \oplus \companion(X^2+1) \oplus \companion(X^2+1) \oplus \companion(X^2+1)$
    \item $B = A = \companion(X^2+X+1) \oplus \companion(X^2+X+1) \oplus \companion(X^2+X+1) \oplus \companion(X^2+X+1)$
    \item $B = A = I_2 \oplus \companion(X^2+1) \oplus \companion(X^2+1) \oplus \companion(X^2+1)$
    \item $B = A = I_4 \oplus \companion(X^2+1) \oplus \companion(X^2+1)$
    \item $B = A = I_6 \oplus \companion(X^2+1)$.
\end{enumerate}
\end{corollary}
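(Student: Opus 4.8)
The plan is to combine the four preceding reductions into a single finite enumeration and then verify that enumeration by computer. Starting from an arbitrary permutation $F\colon\F_2^n\to\F_2^n$ with a non-trivial LE-automorphism, the lemma on prime order lets me assume $F\circ A=B\circ F$ with $\ord(A)=\ord(B)=p$ for a prime $p$, the equality of the two orders being forced precisely because $F$ is a permutation. Thus every case collapses to a pair $(B,A)$ of matrices of equal prime order, and it remains to pin down which such pairs can occur and to select one representative per equivalence class.

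First I would extract the cross-constraint linking $A$ and $B$ from Lemma~\ref{lem:order}: since $F$ is a permutation it restricts to a bijection $\Ord(A,i)\to\Ord(B,i)$ for every $i$, so these subspaces have equal dimension. For a matrix of prime order $p$ the only informative value is $i=1$, since for $p\nmid i$ one has $\langle A^i\rangle=\langle A\rangle$ and hence $\Ord(A,i)=\Ord(A,1)$, while for $p\mid i$ the set is all of $\F_2^n$. Therefore $\dim\Ord(A,1)=\dim\Ord(B,1)$. Because a linear map of prime order $p$ partitions $\F_2^n$ into fixed points and $p$-cycles, its cycle type is completely determined by this one dimension, so the constraint is exactly that $A$ and $B$ have the same order $p$ \emph{and} the same cycle type on $\F_2^n$.

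Next I would apply the two reductions up to equivalence. By Lemma~\ref{lem:rcf} each of $A,B$ may be replaced by its rational canonical form at the cost of passing to a linear-equivalent $F$; and by the power-similarity lemma the pair $(B,A)$ may be replaced by any extended-power-similar pair at the cost of passing to $F$ or $F^{-1}$, again up to linear equivalence. After these steps the task is purely combinatorial: for each prime $p$ such that $\GL(n,\F_2)$ contains an element of order $p$ (a finite list of primes for each fixed $n\in\{6,7,8\}$), enumerate the conjugacy classes of order-$p$ elements via their rational canonical forms, keep only the pairs $(B,A)$ whose two members share a cycle type, and quotient the survivors by extended power-similarity. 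Reporting one representative of each resulting class and running this for $n=6,7,8$ yields the $17$, $27$, and $32$ tuples listed (code in~\cite{cbe90}).

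The conceptual content is short; the delicate part is the bookkeeping. The main obstacle is making the finite enumeration provably complete and non-redundant: correctly listing all rational canonical forms of total size $n$ whose invariant factors divide $X^p+1$ (those of order exactly $p$), correctly computing each cycle type as $2^{\dim\Ord(\cdot,1)}$ fixed points together with $p$-cycles, and correctly identifying the orbits under $\sim_{Ep}$, which combines taking common powers modulo $p$ with the inverse-and-swap of the two coordinates. Each of these is routine in isolation but easy to get wrong by hand, which is why I would delegate the generation and the orbit computation to the referenced code and merely argue that the code realizes exactly the four reductions above.
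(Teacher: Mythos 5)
Your proposal is correct and takes essentially the same route as the paper: the paper obtains the list exactly by combining the prime-order lemma, Lemma~\ref{lem:order}, the rational canonical form (Lemma~\ref{lem:rcf}), and extended power-similarity, and delegates the resulting finite enumeration for $n\in\{6,7,8\}$ to the code in~\cite{cbe90}. Your explicit observation that for prime order $p$ the constraint from Lemma~\ref{lem:order} collapses to the single condition $\dim\Ord(A,1)=\dim\Ord(B,1)$ (equivalently, identical cycle type) is a correct elaboration of a step the paper leaves implicit.
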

Note that, when searching for permutations in one of the above classes, we can reduce the search space further by filtering candidates up to affine equivalence as follows. For a matrix $M \in \GL(n,\F_2)$, let $\comm(M)$ denote the subgroup of $\GL(n,\F_2)$ of all matrices that commute with $M$.
\begin{lemma}
\label{lem:comm}
Let $F \colon \F_2^n \rightarrow \F_2^n$ and let $A,B \in \GL(n,\F_2)$ such that $F \circ A = B \circ F$. Then, $G \circ A = B \circ G$ for any $G = C_B \circ F \circ C_A$ with $C_A \in \comm(A)$ and $C_B \in \comm(B)$.
\end{lemma}
\begin{proof}
We have $G \circ A = C_B \circ F \circ C_A \circ A = C_B \circ F \circ A \circ C_A = C_B \circ B \circ F \circ C_A = B \circ C_B \circ F \circ C_A = B \circ G$.
\end{proof}

This allows us to only consider one representative within the class of $\comm(B) \circ F \circ \comm(A)$.

\section{Searching for APN Permutations with Non-Trivial LE-Automorphisms}
We now use the classification of tuples from Corollary~\ref{cor:classes} to search for APN permutations with non-trivial linear self-equivalences in dimensions $n \in \{6,7,8\}$. First, we observe that not all of the tuples $(B,A)$ obtained in Corollary~\ref{cor:classes} have to be considered in the search. 

\begin{definition}
Let $A,B \in \GL(n,\F_2)$. A tuple $(B,A)$ is \emph{admissible} if there exists an APN permutation $F \colon \F_2^n \rightarrow \F_2^n$ with $F \circ A = B \circ F$. 
\end{definition}

The following two propositions provide necessary conditions for a tuple to be admissible.

\begin{proposition}
\label{prop:inv}
Let $F \colon \F_2^n \rightarrow \F_2^n$ be an APN permutation. Let $A_1,A_2 \subseteq \F_2^n$ be two affine subspaces of $\F_2^n$ such that $F(A_1) = A_2$. Then, $\dim A_i \notin \{2,4,n-1\}$.
\end{proposition}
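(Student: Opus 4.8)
The plan is to reduce the whole statement to two facts about the restriction of $F$ to the subspaces. First I would note that since $F$ is a bijection and $F(A_1)=A_2$, we have $|A_1|=|A_2|$ and hence $\dim A_1 = \dim A_2 =: k$; so it suffices to prove $k \notin \{2,4,n-1\}$. Fix a base point $a_1 \in A_1$ and let $U_1 = A_1 + a_1$ be the linear direction space of $A_1$, and similarly $U_2 = A_2 + F(a_1)$ the direction space of $A_2$. The key elementary observation is that for any nonzero $a \in U_1$ and any $x \in A_1$ we have $x+a \in A_1$, so $F(x), F(x+a) \in A_2$ and therefore $F(x)+F(x+a) \in U_2$. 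Identifying $A_1$ and $A_2$ with $\F_2^k$ via affine isomorphisms (which preserve the APN property), this shows that $F|_{A_1}$ is a permutation of $\F_2^k$, and for every nonzero direction $a \in U_1$ and every $b$ the equation $F|_{A_1}(x)+F|_{A_1}(x+a)=b$ has at most as many solutions $x \in A_1$ as $F(x)+F(x+a)=b$ has in $\F_2^n$. By the APN property of $F$ this is at most $2$, so $F|_{A_1}$ is an APN permutation in dimension $k$.

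For $k \in \{2,4\}$ this is a contradiction. There is no APN permutation on $\F_2^2$: every permutation of $\F_2^2$ is affine (since $|\AGL(2,\F_2)| = 24 = 4!$), and an affine map has constant first-order derivatives, so it is not APN for $n \ge 2$. There is also no APN permutation on $\F_2^4$; here I would invoke the known classification of APN functions in small dimension. Hence $k \notin \{2,4\}$.

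For $k = n-1$ the restriction argument alone is not enough (APN permutations in dimension $n-1$ may well exist when $n-1$ is odd), so I would argue through the image of a derivative instead. In this case $A_1$ and its complement are exactly the two cosets of the linear hyperplane $U_1$, and $F$ maps them onto the two cosets of the hyperplane $U_2$. Therefore, for every nonzero $a \in U_1$ and \emph{every} $x \in \F_2^n$, the points $x$ and $x+a$ lie in a common coset of $U_1$, so $F(x)$ and $F(x+a)$ lie in a common coset of $U_2$, giving $F(x)+F(x+a) \in U_2$. Thus the image of $x \mapsto F(x)+F(x+a)$ is contained in $U_2$. By the APN property this image has exactly $2^{n-1} = |U_2|$ elements, so it equals $U_2$ and in particular contains $0$. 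But $F(x)+F(x+a)=0$ with $a \neq 0$ contradicts the injectivity of $F$. This rules out $k = n-1$ and finishes the proof.

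The self-contained parts are the dimension-count reduction, the fact that the restriction is an APN permutation, the dimension-$2$ argument, and the hyperplane image argument. The hard part will be the non-existence of an APN permutation on $\F_2^4$: I do not see an elementary counting proof in the style of the hyperplane case, so I would rely on the known exhaustive classification as the external input for that single case.
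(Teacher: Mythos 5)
Your proof is correct, and its skeleton---reducing to the fact that the restriction $F|_{A_1}$, viewed through affine identifications of $A_1$ and $A_2$ with $\F_2^k$, is an APN permutation in dimension $k$---is exactly the paper's: the paper normalizes $A_1 = A_2 = \F_2^d \times \{0\}^{n-d}$ by affine equivalence and notes that this yields an APN permutation in dimension $d$. The difference lies in how the individual cases are discharged. The paper cites Hou~\cite{DBLP:journals/dam/Hou06} for the non-existence of APN permutations in dimensions $2$ and $4$, and cites Proposition 2.1 of the same paper for $d = n-1$; you instead inline an elementary proof for $k=2$ (every permutation of $\F_2^2$ is affine since $|\AGL(2,\F_2)| = 24 = 4!$, and affine maps have constant derivatives) and, for $k = n-1$, give a self-contained derivative-image argument: the image of $x \mapsto F(x)+F(x+a)$ for nonzero $a \in U_1$ lies in $U_2$, and since solutions of $F(x)+F(x+a)=b$ pair up as $x \leftrightarrow x+a$, the APN bound forces each attained value to have exactly two preimages, so the image has $2^{n-1}$ elements, equals $U_2$, and hence contains $0$, contradicting injectivity. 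This is essentially the content of the result the paper outsources to Hou. You are also right---and it is a point the paper's terse proof does not spell out---that the restriction argument alone cannot handle $k = n-1$, since APN permutations do exist in odd dimensions, which is precisely why that case needs the separate hyperplane argument. In the end both you and the paper require an external input only for $k=4$ (no APN permutation exists on $\F_2^4$), so your version is strictly more self-contained while following the same decomposition.
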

\begin{proof}
Let $d = \dim A_1 = \dim A_2$. Without loss of generality, we can choose $A_1 = A_2 = \F_2^d \times \{0\}^{n-d}$ by considering a permutation $F'$ that is affine equivalent to $F$. Because $F'$ is APN, the property $F'(A_1) = A_2$ implies the existence of an APN permutation in dimension $d$. This cannot happen for $d=2$ and $d=4$, see~\cite{DBLP:journals/dam/Hou06}. The case of $d=n-1$ was shown in \cite[Proposition 2.1]{DBLP:journals/dam/Hou06}. 
\end{proof}

Therefore, if $(B,A) \in \GL(n,\F_2) \times \GL(n,\F_2)$ is an admissible tuple we have, for all $i \in \mathbb{N}$, $\dim \Ord(A,i) = \dim \Ord(B,i) \notin \{2,4,n-1\}$.

\begin{proposition}
\label{prop:poly}
Let $(B,A)  \in \GL(n,\F_2) \times \GL(n,\F_2)$ be an admissible tuple with $\ord(A) = \ord(B) = k$ for $k$ prime. Then, there exists no quadrinomial in $\F_2[X]/(X^k+1)$ that is a multiple of both the minimal polynomial of $A$ and the minimal polynomial of $B$.
\end{proposition}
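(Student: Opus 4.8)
The plan is to assume for contradiction that such a quadrinomial $q = X^{i_1} + X^{i_2} + X^{i_3} + X^{i_4} \in \F_2[X]/(X^k+1)$ exists, with $0 \le i_1 < i_2 < i_3 < i_4 < k$, and to convert the $4$-term relation it encodes into a $2$-dimensional affine subspace on which the APN permutation sums to zero. The first step is to pass from the quotient ring to honest matrix identities. Since $\ord(A) = \ord(B) = k$, both minimal polynomials divide $X^k+1$, so $A^k = I_n$ and $B^k = I_n$. Writing the hypothesis that $q$ is a multiple of $p_A := \min_A$ as $q = p_A s + (X^k+1)t$ in $\F_2[X]$ and evaluating at $A$ gives $q(A) = p_A(A)s(A) + (A^k+I_n)t(A) = 0$, and likewise $q(B) = 0$. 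Thus $A^{i_1}+A^{i_2}+A^{i_3}+A^{i_4} = 0$ and $B^{i_1}+B^{i_2}+B^{i_3}+B^{i_4} = 0$ as matrices.

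Next I would use admissibility: there is an APN permutation $F$ with $F \circ A = B \circ F$, hence $F \circ A^i = B^i \circ F$ for every $i$. Applying this to each exponent, for every $x \in \F_2^n$ we obtain
\[
F(A^{i_1}x) + F(A^{i_2}x) + F(A^{i_3}x) + F(A^{i_4}x) = (B^{i_1}+B^{i_2}+B^{i_3}+B^{i_4})F(x) = q(B)F(x) = 0,
\]
while the four input points themselves satisfy $A^{i_1}x + A^{i_2}x + A^{i_3}x + A^{i_4}x = q(A)x = 0$. So, for \emph{every} $x$, the four points $A^{i_1}x,\dots,A^{i_4}x$ have zero sum and $F$ sums to zero on them.

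It then remains to pick $x$ so that these four points are \emph{distinct}: four distinct points of $\F_2^n$ with zero sum form a $2$-dimensional affine subspace $\{v,\,v+d_1,\,v+d_2,\,v+d_1+d_2\}$ with $d_1,d_2$ linearly independent, and the vanishing of $F$ on such a subspace contradicts the standard characterisation of APN functions (that $F(v)+F(v+d_1)+F(v+d_2)+F(v+d_1+d_2)\neq 0$ for all $v$ and all linearly independent $d_1,d_2$, which is just the $2$-to-$1$ property of every nonzero derivative). I expect the distinctness step to be the only point requiring care, and it is exactly here that the primality of $k$ is used. A coincidence $A^{i_j}x = A^{i_l}x$ with $j\neq l$ is equivalent to $A^{i_j-i_l}x = x$; since $0 < |i_j-i_l| < k$ and $k$ is prime, $\gcd(i_j-i_l,k)=1$, so $A^{i_j-i_l}$ generates $\langle A\rangle$ and its fixed space equals $\Ord(A,1) = \ker(A+I_n)$. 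Hence every possible coincidence occurs precisely on the subspace $V := \Ord(A,1)$, and any $x \notin V$ yields four distinct points. Because $\ord(A)=k\ge 2$ forces $A\neq I_n$ and therefore $V \subsetneq \F_2^n$, such an $x$ exists, producing the forbidden vanishing $2$-dimensional subspace and completing the contradiction.
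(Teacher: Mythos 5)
Your proof is correct and takes essentially the same route as the paper's: the paper likewise evaluates the quadrinomial at $A$ and $B$ (after normalizing it to the form $X^a+X^b+X^c+1$), transports the relation through $F \circ A^i = B^i \circ F$, and chooses a point $g$ with $\ord_A(g)=k$ --- which, for prime $k$, is exactly your condition $x \notin \Ord(A,1)$ --- to obtain four distinct points with equal input difference on which the derivative equation has four solutions. Your write-up merely makes explicit the lifting of the divisibility from $\F_2[X]/(X^k+1)$ to a matrix identity and the distinctness argument, both of which the paper leaves implicit.
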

\begin{proof}
Suppose there is such a quadrinomial $p = X^a + X^b + X^c + 1$. Then, both $A^a + A^b = A^c +1$ and $B^a + B^b = B^c +1$ hold. Let $g \in \F_{2}^n$ be an element with $\ord_A(g) = k$ 
and let $F$ be a permutation that fulfills the self-equivalence for $(B,A)$. We have
\begin{align*}
    F(A^ag) + F(A^bg) = B^aF(g) + B^bF(g) &= (B^a+B^b)F(g)\\
    &= (B^c+1)F(g) = F(A^cg) + F(g),
\end{align*}
which implies that $F$ cannot be APN.
\end{proof}

\paragraph{Depth-First Tree Search Algorithm}
Using the above propositions, for some of the tuples given in Corollary 1 we can immediately see that they are not admissible. For the other tuples, we can check their admissibility by the recursive depth-first tree search described in Algorithm~\ref{alg:search} at the end of the paper. 

The algorithm receives two matrices $A,B \in \GL(n,\F_2)$ as input and constructs the look-up tables of all $n$-bit APN permutations $F$ with $F \circ A = B \circ F$ up to linear equivalence. To reduce the search space according to Lemma~\ref{lem:comm}, we provide a subset of $C_A \subseteq \comm(A)$ and a subset $C_B \subseteq \comm(B)$ as additional inputs. The idea is that the algorithm should output only the \emph{smallest} representative of an APN permutation up to conjugation with elements in $C_A$, resp. $C_B$, where the term \emph{smallest} refers to some lexicographic ordering of the look-up table (this check is performed by the procedure $\textsc{isSmallest}$).\footnote{The test for being the smallest representative is omitted if the depth exceeds some threshold $t$, because at some depth it is faster to just traverse the remaining tree. Therefore, it might happen that the algorithm outputs more representatives than just the smallest.}

At the beginning of the search, the look-up table is initialized to $\bot$ at each entry, where $\bot$ indicates that the respective entry is not yet defined. At the beginning of each iteration of the recursive $\textsc{NextVal}$ procedure, the algorithm checks whether the look-up table is completely defined, i.e., whether there are no entries marked $\bot$ left (procedure $\textsc{isComplete}$). If it is completely defined, the look-up table is appended to the list of solutions and the procedure returns. Otherwise, the algorithm chooses the next undefined entry $x$ in the look-up table according to some previously defined ordering (procedure $\textsc{NextFreePosition}$) and sets $F(x)$ to the next value $y$ (also according to some previously defined ordering) which does not yet occur in the look-up table and for which $\ord_A(x) = \ord_B(y)$. Besides fixing $F(x) \coloneqq y$, the algorithm further fixes $F(A^i(x)) \coloneqq B^i(y)$ for all $i$ according to the self-equivalence. For each entry that is fixed, the algorithm checks whether the partially-defined function can still be APN (procedure $\textsc{IsAPN}$). In case that the APN property has already been violated,  the entry $x$ is set to the next possible value $y$. In case that the partially-defined function can still be APN, the algorithm goes one level deeper.

\paragraph{APN Check}
We would like to stress that it is important to implement Algorithm~\ref{alg:search} carefully in order to obtain the results presented in this work. Indeed, the description above is a bit simplified. Instead of checking for each newly fixed entry whether the partially-defined function can still be APN (for example by constructing a partial difference distribution table (DDT)), we use a global two-dimensional array, called $\mathrm{ddt}$, and initialized to $0$, of size $2^n \times 2^n$ that dynamically stores the partial DDT. Recall that the DDT of a function $F \colon \F_2^n \rightarrow \F_2^n$ is defined as the $2^n \times 2^n$ integer matrix containing $|\{ x \in \F_2^n \mid F(x) + F(x+\alpha) = \beta\}|$ at the position in row $\alpha$ and column $\beta$. After each entry of $F$ is fixed, we update the partial DDT according to the newly fixed entry and check whether, for any $\alpha \neq 0$, it contains values larger than $2$ (in that case, $F$ cannot be APN). This is done by the following procedure, where $\oplus$ denotes the bitwise XOR operation and $\wt(\alpha)$ denotes the Hamming weight of the integer $\alpha$ interpreted as an element in $\F_2^n$:
\begin{algorithmic}[1]
		\Function{addDDTInformation}{c}
		\For {$\alpha \in [1,\dots,2^n-1]$ with $\wt(\alpha)$ being even}
		    \If {$\mathrm{sbox}[c \oplus \alpha] \neq \bot$}
		        \State $\mathrm{ddt}[\alpha][\mathrm{sbox}[c]\oplus\mathrm{sbox}[c\oplus\alpha]] \gets \mathrm{ddt}[\alpha][\mathrm{sbox}[c]\oplus\mathrm{sbox}[c\oplus\alpha]] + 2$
		        \If{$\mathrm{ddt}[\alpha][\mathrm{sbox}[c]\oplus\mathrm{sbox}[c\oplus\alpha]] > 2$}
		            \State \Return 0
		        \EndIf
		    \EndIf
		\EndFor
		\State \Return 1
		\EndFunction
\end{algorithmic}
		
The input parameter $c$ corresponds to the position that we fix in $\mathrm{sbox}$. If an entry is reset to $\bot$ (line 32 of Algorithm~\ref{alg:search}) the $\mathrm{ddt}$ array also has to be updated accordingly, which is implemented by the following procedure. Note that in the for loop, the elements $\alpha$ have to be traversed in the same order as in $\textsc{addDDTInformation}$. The parameter $c$ corresponds to the entry that is reset to $\bot$.

\begin{algorithmic}[1]
		\Function{removeDDTInformation}{c}
		\For {$\alpha \in [1,\dots,2^n-1]$ with $\wt(\alpha)$ being even}
		    \If {$\mathrm{sbox}[c \oplus \alpha] \neq \bot$}
		        \State $\mathrm{ddt}[\alpha][\mathrm{sbox}[c]\oplus\mathrm{sbox}[c\oplus\alpha]] \gets \mathrm{ddt}[\alpha][\mathrm{sbox}[c]\oplus\mathrm{sbox}[c\oplus\alpha]] - 2$
		        \If {$\mathrm{ddt}[\alpha][\mathrm{sbox}[c]\oplus\mathrm{sbox}[c\oplus\alpha]] = 2$}
		            \State \Return 0
		        \EndIf
		    \EndIf
		\EndFor
		\EndFunction
	\end{algorithmic}

Note that we do not have to consider the whole DDT for the APN check. As it was shown in~\cite[Theorem 4]{DBLP:conf/eurocrypt/BethD93}, only those DDT entries corresponding to input differences with even Hamming weight have to be computed. 

For the details of our implementation, we refer to the source code provided in~\cite{cbe90}.
\renewcommand{\algorithmicrequire}{\textbf{Input:}}
\renewcommand{\algorithmicensure}{\textbf{Output:}}
\algnewcommand{\algorithmicgoto}{\textbf{go to}}
\algnewcommand{\Goto}[1]{\algorithmicgoto~\ref{#1}}
\begin{algorithm}
	\caption{\textsc{APNsearch}} \label{alg:search}
	\begin{algorithmic}[1]
		\Require Matrices $A, B \in \GL(n,\F_2)$, $C_A \subseteq \comm(A), C_B \subseteq \comm(B)$. Global array $\mathrm{sbox}$ of size $2^n$, initialized to $\mathrm{sbox}[i] = \bot$, for all $i \in \{0,\dots,2^n-1\}$
		\Ensure All $n$-bit APN permutations $F$ s.t.\ $FA = BF$ up to linear equivalence.
		    \vspace{.1em}
		    \State $L \gets \{ \}, \quad \mathrm{sbox}[0] \gets 0$
		    \State $\textsc{nextVal}(0)$
            \State \Return $L$ 
            \vspace{1em}
		\Function{nextVal}{depth}
		    \If{$\textsc{isComplete}(\mathrm{sbox})$} 
		        \State $L \gets L \cup \{\mathrm{sbox}$\}
		        \State \Return
		    \EndIf
		    \State $x \gets \textsc{nextFreePosition}()$
		    \For{$y \in \F_2^n$}
		        \If{$y$ is not in $\mathrm{sbox}$ and $\ord_A(x) = \ord_B(y)$} \Comment{$x$ interpreted to be in $\F_2^n$}
		            \State $xS \gets x, \quad yS \gets y$
		            \For{$i=0$ to $\ord_A(x)-2$}
		                \State $\mathrm{sbox}[xS] \gets yS$
		                \If{not $\textsc{isAPN}(\mathrm{sbox})$}
		                    \State \Goto{undo}
		                \EndIf
		                \State $xS \gets A(xS), \quad yS \gets B(yS)$
		            \EndFor
		            \State $\mathrm{sbox}[xS] \gets yS$
		            \If{not $\textsc{isAPN}(\mathrm{sbox})$}
		                \State \Goto{undo}    
		            \EndIf
		            \If{$\mathrm{depth} \leq t$}
		                 \If{$\textsc{isSmallest}(\mathrm{sbox})$} 
		                 \State $\textsc{nextVal}(\mathrm{depth}+1)$
		            \EndIf
		            \Else \State $\textsc{nextVal}(\mathrm{depth}+1)$
		            \EndIf
		            \Repeat \label{undo}
		                \State $\mathrm{sbox}[xS] \gets \bot$
		                \State $xS \gets A^{-1}(xS)$
		            \Until{$xS = A^{-1}(x)$}
		        \EndIf
		    \EndFor
		\EndFunction
	\end{algorithmic}
\end{algorithm}

\begin{table}[ht!]
\caption{Analysis of the tuples $(B,A)$ given in Corollary~\ref{cor:classes}. ``No.''\@ corresponds to the number of the tuple in Corollary~\ref{cor:classes}. 
The column ``admissible'' indicates whether there exists an $n$-bit APN permutation $F$ for which $F \circ A = B \circ F$. In case that it does, we list \emph{all} the solutions for the CCZ-equivalence classes of such $F$. The ``?'' indicates that we were not able to either exclude the tuple by Prop.~\ref{prop:inv} or \ref{prop:poly}, or to finish the exhaustive search for $F$.\label{tab:results}}
\centering
{\small
\begin{tabular}{|lll|lll|ll|}
    \hline
     \multicolumn{3}{|c|}{$n=6$} & \multicolumn{3}{|c|}{$n=7$} & \multicolumn{2}{|c|}{$n=8$}\\
     No. & admissible & solutions & No. & admissible & solutions & No. & admissible\\
     \hline
     1 & no (Alg.~\ref{alg:search}) &  & 1 & yes & $x \mapsto x^{5}$ & 1 & no (Alg.~\ref{alg:search}) \\
      &  &  &  &  & $x \mapsto x^{9}$ &  &  \\
       &  &  &  &  & $x \mapsto x^{63}$ &  &  \\
        &  &  &  &  & $x \mapsto x^{78}$ &  &  \\
         &  &  &  &  & $x \mapsto x^{85}$ &  &  \\
          &  &  &  &  & $x \mapsto x^{88}$ &  &  \\
     2 & no (Alg.~\ref{alg:search}) &  & 2 & no (Prop.~\ref{prop:poly}) &  & 2 & no (Alg.~\ref{alg:search}) \\
     3 & no (Alg.~\ref{alg:search}) &  & 3 & no (Prop.~\ref{prop:poly}) &  & 3 & no (Prop.~\ref{prop:poly}) \\
     4 & no (Prop.~\ref{prop:poly}) &  & 4 & yes & $x \mapsto x^{63}$ & 4 & no (Alg.~\ref{alg:search}) \\
     5 & yes & Dillon's~\cite{browning2010apn} & 5 & yes & $x \mapsto x^{9}$ & 5 & no (Alg.~\ref{alg:search})\\
     6 & no (Prop.~\ref{prop:inv}) &  & 6 & no (Prop.~\ref{prop:poly}) &  & 6 & no (Alg.~\ref{alg:search}) \\
     7 & no (Alg.~\ref{alg:search}) &  & 7 & yes & $x \mapsto x^{5}$ & 7 & no (Prop.~\ref{prop:poly}) \\
     8 & no (Prop.~\ref{prop:poly}) &  & 8 & yes & $x \mapsto x^{78}$ & 8 & no (Alg.~\ref{alg:search}) \\
     9 & no (Prop.~\ref{prop:inv}) &  & 9 & yes & $x \mapsto x^{85}$ & 9 & no (Alg.~\ref{alg:search}) \\
     10 & no (Alg.~\ref{alg:search}) &  & 10 & yes & $x \mapsto x^{88}$ & 10 & no (Alg.~\ref{alg:search}) \\
     11 & no (Alg.~\ref{alg:search}) &  & 11 & no (Prop.~\ref{prop:poly}) &  & 11 & no (Prop.~\ref{prop:poly})\\
     12 & no (Prop.~\ref{prop:poly}) &  & 12 & no (Prop.~\ref{prop:inv}) & & 12 & no (Prop.~\ref{prop:poly}) \\
     13 & no (Prop.~\ref{prop:inv}) &  & 13 & no (Prop.~\ref{prop:inv}) &  & 13 & no (Prop.~\ref{prop:inv}) \\
     14 & no (Alg.~\ref{alg:search}) &  & 14 & no (Prop.~\ref{prop:inv}) &  & 14 & no (Alg.~\ref{alg:search}) \\
     15 & no (Alg.~\ref{alg:search}) &  & 15 & no (Prop.~\ref{prop:inv}) &  & 15 & no (Alg.~\ref{alg:search}) \\
     16 & no  (Prop.~\ref{prop:inv}) &  & 16 & ? &  & 16 & no (Alg.~\ref{alg:search}) \\
     17 & no (Prop.~\ref{prop:inv}) &  & 17 & no (Alg.~\ref{alg:search}) &  & 17 & no (Prop.~\ref{prop:poly}) \\
      &  &  & 18 & no (Alg.~\ref{alg:search}) &  & 18 & no (Prop.~\ref{prop:inv}) \\
      &  &  & 19 & no (Prop.~\ref{prop:poly}) &  & 19 & no (Prop.~\ref{prop:inv}) \\
      &  &  & 20 & no (Prop.~\ref{prop:inv}) &  & 20 & no (Prop.~\ref{prop:inv})\\
      &  &  & 21 & no (Prop.~\ref{prop:inv}) &  & 21 & no (Prop.~\ref{prop:inv}) \\
      &  &  & 22 & ? &  & 22 & ? \\
      &  &  & 23 & ? &  & 23 & no (Alg.~\ref{alg:search}) \\
      &  &  & 24 & no (Alg.~\ref{alg:search}) &  & 24 & no (Prop.~\ref{prop:poly}) \\
      &  &  & 25 & no (Prop.~\ref{prop:inv}) &  & 25 & no (Prop.~\ref{prop:inv}) \\
      &  &  & 26 & no (Alg.~\ref{alg:search}) &  & 26 & no (Prop.~\ref{prop:inv}) \\
      &  &  & 27 & no (Prop.~\ref{prop:inv}) &  & 27 & no (Alg.~\ref{alg:search}) \\
      &  &  &  &  &  & 28 & no (Prop.~\ref{prop:inv}) \\
      &  &  &  &  &  & 29 & no (Alg.~\ref{alg:search}) \\
      &  &  &  &  &  & 30 & ? \\
      &  &  &  &  &  & 31 & no (Alg.~\ref{alg:search})  \\
      &  &  &  &  &  & 32 & no (Prop.~\ref{prop:inv}) \\
      \hline
\end{tabular}
}
\end{table}

\subsection{Results for $n=6$}
\label{sec:res_6}
By Propositions~\ref{prop:inv} and~\ref{prop:poly}, we immediately obtain that $8$ out of the $17$ tuples given in Corollary~\ref{cor:classes} are not admissible (see Table~\ref{tab:results}). We performed an \emph{exhaustive} search for APN permutations in the remaining $9$ tuples using Algorithm~\ref{alg:search}. The case of \begin{equation}\label{eq:complicated} B= A = \companion(X^2+1) \oplus \companion(X^2+1) \oplus \companion(X^2+1)\end{equation}
(Class 14) was the most difficult one. The functions $A$ and $B$ are involutions and therefore only consist of cycles of length 1 and 2, which causes Algorithm~\ref{alg:search} to be less efficient. However, without loss of generality we can set $F$ on the fixed points of $A$ to an APN permutation. More precisely, let $F\colon \F_2^n \rightarrow \F_2^n$ be an APN permutation with $F \circ A = B \circ F$ for $A,B \in \GL(n,\F_2
^n)$ with $\dim \Ord(A,1) = \dim \Ord(B,1) = k$. Let further $\pi_A \colon \F_2^k \rightarrow \Ord(A,1)$ and $\pi_B \colon \F_2^k \rightarrow \Ord(B,1)$ be (vector space) isomorphisms. Then there exists an APN permutation $G \colon \F_2^k \rightarrow \F_2^k$ such that, for all $x \in \Ord(A,1)$, we have $F(x) = \pi_B(G(\pi_A^{-1}(x)))$. The following definition and lemma show a sufficient condition that allows to fix this APN permutation $G$ up to affine equivalence.
\begin{definition}
An element $A \in \GL(n,\F_2)$ is called \emph{extendable} if, for all linear permutations $L \colon \Ord(A,1) \rightarrow \Ord(A,1)$, there exists an element $L' \in \comm(A)$ such that the restriction of $L'$ to $\Ord(A,1)$ equals $L$.
\end{definition}

\begin{lemma}
Let $A,B \in \GL(n,\F_2)$ be extendable and let $F\colon \F_2^n \rightarrow \F_2^n$ be a permutation for which $F \circ A = B \circ F$. Further, for all $x \in \Ord(A,1)$, let $F(x) = \pi_B(G(\pi_A^{-1}(x)))$, where  $G \colon \F_2^{k} \rightarrow \F_2^{k}$ with $k = \dim \Ord(A,1)$ and $\pi_A \colon \F_2^k \rightarrow \Ord(A,1)$, $\pi_B \colon \F_2^k \rightarrow \Ord(B,1)$ are vector space isomorphisms. For every permutation $G'$ affine-equivalent to $G$, there exists a permutation $F'$ affine-equivalent to $F$ such that $F' \circ A = B \circ F'$ and, for all $x \in \Ord(A,1)$, it fulfills
\[F'(x) = 
    \pi_B(G'(\pi_A^{-1}(x))).\]
\end{lemma}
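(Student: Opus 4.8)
The plan is to realize the prescribed passage from $G$ to $G'$ on the fixed space $\Ord(A,1)$ by conjugating $F$ with suitable affine permutations that commute with $A$ and $B$, so that Lemma~\ref{lem:comm}, in a mildly generalized affine form, keeps the self-equivalence intact while altering the global map only up to affine equivalence. Since $G'$ is affine-equivalent to $G$, I would first write $G' = \beta \circ G \circ \alpha$ with $\alpha,\beta \in \AGL(k,\F_2)$, say $\alpha(v) = L_\alpha v + c_\alpha$ and $\beta(v) = L_\beta v + c_\beta$ for $L_\alpha,L_\beta \in \GL(k,\F_2)$ and $c_\alpha,c_\beta \in \F_2^k$. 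Transporting the linear parts through the isomorphisms $\pi_A,\pi_B$ yields the linear permutations $\pi_A \circ L_\alpha \circ \pi_A^{-1}$ of $\Ord(A,1)$ and $\pi_B \circ L_\beta \circ \pi_B^{-1}$ of $\Ord(B,1)$.

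Next I would use extendability to lift these linear parts: by the defining property of extendability there exist $\Lambda_A \in \comm(A)$ and $\Lambda_B \in \comm(B)$ whose restrictions to $\Ord(A,1)$, resp.\ $\Ord(B,1)$, equal $\pi_A \circ L_\alpha \circ \pi_A^{-1}$, resp.\ $\pi_B \circ L_\beta \circ \pi_B^{-1}$. The remaining point is the translation part, which the definition of extendability does not address, and this is where I expect the only real subtlety. The resolution is that the transported shift vectors $\pi_A(c_\alpha) \in \Ord(A,1)$ and $\pi_B(c_\beta) \in \Ord(B,1)$ are, by definition of $\Ord(\cdot,1)$, fixed by $A$ and $B$. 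Consequently the affine maps
\[ C_A \colon x \mapsto \Lambda_A(x) + \pi_A(c_\alpha), \qquad C_B \colon x \mapsto \Lambda_B(x) + \pi_B(c_\beta) \]
are affine permutations (their linear parts are invertible) which still commute with $A$ and $B$ as functions: indeed $C_A \circ A = A \circ C_A$ follows from $\Lambda_A \circ A = A \circ \Lambda_A$ together with $A(\pi_A(c_\alpha)) = \pi_A(c_\alpha)$, and similarly for $C_B$.

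Finally I would set $F' := C_B \circ F \circ C_A$, which is affine-equivalent to $F$. Because the proof of Lemma~\ref{lem:comm} uses only the functional identities $C_A \circ A = A \circ C_A$ and $C_B \circ B = B \circ C_B$, it applies verbatim to these affine $C_A,C_B$ and gives $F' \circ A = B \circ F'$. It then remains to verify the restriction to $\Ord(A,1)$: since both $\Lambda_A$ and the shift by $\pi_A(c_\alpha)$ preserve the subspace $\Ord(A,1)$, we have $C_A(x) \in \Ord(A,1)$ for $x \in \Ord(A,1)$, so $F$ acts there through $G$, and transporting back through $\pi_A^{-1}$ shows $\pi_A^{-1}(C_A(x)) = \alpha(\pi_A^{-1}(x))$; applying $C_B$ after $F$ and transporting through $\pi_B$ gives $C_B(\pi_B(w)) = \pi_B(\beta(w))$. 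Composing these yields $F'(x) = \pi_B(\beta(G(\alpha(\pi_A^{-1}(x))))) = \pi_B(G'(\pi_A^{-1}(x)))$, as required. The main obstacle is exactly the affine (translation) component: extendability is a purely linear statement, so the crux is to observe that the relevant translations lie inside the fixed subspaces and hence commute with $A$ and $B$, which is precisely what makes the affine conjugation simultaneously self-equivalence-preserving and able to realize an arbitrary affine-equivalent $G'$.
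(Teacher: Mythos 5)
Your proof is correct and takes essentially the same approach as the paper's: extendability lifts the linear parts of the affine maps to elements of $\comm(A)$ and $\comm(B)$, and the translation parts $\pi_A(c_\alpha) \in \Ord(A,1)$, $\pi_B(c_\beta) \in \Ord(B,1)$ are fixed by $A$, resp.\ $B$, so the resulting affine permutations still commute functionally with $A$ and $B$ and the conjugation $C_B \circ F \circ C_A$ preserves the self-equivalence. The only (cosmetic) difference is that you treat pre- and post-composition in one step via an affine version of Lemma~\ref{lem:comm}, where the paper splits the argument into the two cases $G' = G \circ C$ and $G' = C \circ G$ and verifies the commutation directly.
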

\begin{proof}
Let  $C\colon \F_2^k \rightarrow \F_2^k, x \mapsto Lx + c$ be an affine permutation with $L \in \GL(k,\F_2)$ and $c \in \F_2^k$. We first show the statement for $G' = G \circ C$ and then for $G' = C \circ G$, which finally proves the statement for all $G'$ affine-equivalent to $G$.

Case 1: $G' = G \circ C$. Because $A$ is extendable, there exists an affine permutation $C'\colon \F_2^n \rightarrow \F_2^n, x \mapsto L'x + \pi_A(c)$, where the linear permutation $L' \in \GL(n,\F_2)$ has the property that
\[ L'(x) = 
\pi_A(L(\pi_A^{-1}(x)))  \text{ if } x \in \Ord(A,1)\]
and fulfills $L'A = AL'$. Therefore, $C'A = L'A + \pi_A(c) = AL' + \pi_A(c) = A(L'+ \pi_A(c)) = AC'$, since $\pi_A(c) \in \Ord(A,1)$. For $x \in \Ord(A,1)$, we have
\[ F(C'(x)) = \pi_B(G(\pi_A^{-1}(L'x + \pi_A(c))))   = \pi_B(G(L\pi_A^{-1}(x) + c)) = \pi_B(G'(\pi_A^{-1}(x))).\]
Further, $F \circ C' \circ A = F \circ A \circ C' = B \circ F \circ C'$.

Case 2: $G' = C \circ G$. Because $B$ is extendable, there exists an affine permutation $C'\colon \F_2^n \rightarrow \F_2^n, x \mapsto L'x + \pi_B(c)$, where the linear permutation $L' \in \GL(n,\F_2)$ has the property that
\[ L'(x) = 
\pi_B(L(\pi_B^{-1}(x)))  \text{ if } x \in \Ord(B,1)\]
and fulfills $L'B = BL'$.  For $x \in \Ord(A,1)$, we have
\[ C'(F(x)) = L'(\pi_B(G(\pi_A^{-1}(x)))) + \pi_B(c) = \pi_B(LG(\pi_A^{-1}(x))+c) = \pi_B(G'(\pi_A^{-1}(x))).\]
Further, $C' \circ F \circ A = C' \circ B \circ F = B \circ C' \circ F$.
\end{proof}

For the case of $A = B$ as in Equation~(\ref{eq:complicated}), we checked that $A$, resp. $B$ is extendable. Therefore,
since $A$, resp. $B$ has eight fixed points and since there is only one APN permutation on $3$ bits up to affine equivalence, eight entries of $F$ can be fixed in advance. This trick reduced the computation time for this case to around 8 hours on a PC. 

To summarize, APN permutations exist only in the case 
\begin{equation} B = A = \companion(X^6+X^5+X^4+X^3+X^2+X+1)\end{equation}
(Class 5) and they are all CCZ-equivalent to Dillon's permutation.  As a conclusion, we have shown the following.
\begin{theorem}
Up to CCZ-equivalence, there is only one APN permutation $F$ in dimension 6 with $|\AutLE(F)| > 1$.
\end{theorem}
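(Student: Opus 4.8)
The plan is to leverage the classification established in Corollary~\ref{cor:classes}: every permutation $F\colon \F_2^6 \rightarrow \F_2^6$ carrying a non-trivial LE-automorphism is, up to CCZ-equivalence (recall $F$ is CCZ-equivalent to $F^{-1}$ and that linear equivalence preserves $\AutLE$ up to isomorphism), linearly equivalent to a permutation $G$ whose LE-automorphism is one of the $17$ listed tuples $(B,A)$ in rational canonical form. It therefore suffices to decide, for each of these $17$ tuples, whether it is \emph{admissible}, i.e.\ whether there exists an APN permutation $F$ with $F \circ A = B \circ F$, and to classify the CCZ-equivalence classes of the resulting $F$. The theorem will follow once we show that exactly one tuple (Class~5) is admissible and that every APN permutation it produces is CCZ-equivalent to Dillon's permutation.

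First I would apply the two necessary conditions to prune the list. By Proposition~\ref{prop:inv}, an admissible tuple must satisfy $\dim \Ord(A,i) = \dim \Ord(B,i) \notin \{2,4,5\}$ for every $i \in \mathbb{N}$; reading off the fixed-point dimensions from the companion-block structure of each $A$ eliminates every tuple possessing an invariant affine subspace of a forbidden dimension. By Proposition~\ref{prop:poly}, any tuple of prime order $k$ is inadmissible whenever the minimal polynomials of $A$ and $B$ admit a common quadrinomial multiple in $\F_2[X]/(X^k+1)$. As recorded in Table~\ref{tab:results}, these two criteria together dispose of $8$ of the $17$ tuples without any search.

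For the remaining $9$ tuples I would invoke the exhaustive depth-first search of Algorithm~\ref{alg:search}, which enumerates all APN permutations compatible with a prescribed self-equivalence, up to linear equivalence, while using the commutant reduction of Lemma~\ref{lem:comm} to retain only one representative of each class $\comm(B) \circ F \circ \comm(A)$. For eight of these nine tuples the search terminates with an empty output, certifying inadmissibility; for Class~5, where $B = A = \companion(X^6+X^5+X^4+X^3+X^2+X+1)$, the search returns only permutations lying in the CCZ-equivalence class of Dillon's permutation. Assembling these outcomes yields the claimed uniqueness.

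The main obstacle will be the efficiency of the search, concentrated in Class~14, where $B = A$ is an involution built from three $\companion(X^2+1)$ blocks and hence consists only of $1$- and $2$-cycles; with so many short cycles the self-equivalence constrains very little and the branching factor of the tree is large. To render this case tractable I would first verify that $A$ (equivalently $B$) is \emph{extendable}, so that the preceding lemma on extendable matrices permits fixing, up to affine equivalence, the action of $F$ on the eight fixed points of $A$ to the unique (up to affine equivalence) APN permutation on $3$ bits. Pinning down these eight entries in advance collapses the search space enough to complete the exhaustive verification.
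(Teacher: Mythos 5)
Your proposal is correct and follows essentially the same route as the paper: pruning $8$ of the $17$ tuples from Corollary~\ref{cor:classes} via Propositions~\ref{prop:inv} and~\ref{prop:poly}, running the exhaustive search of Algorithm~\ref{alg:search} (with the commutant reduction of Lemma~\ref{lem:comm}) on the remaining $9$, and handling the involution Class~14 by verifying extendability and fixing $F$ on the eight fixed points to the unique $3$-bit APN permutation up to affine equivalence. The only admissible class is then Class~5, yielding exactly Dillon's permutation up to CCZ-equivalence, just as in the paper.
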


\subsection{Results for $n=7$}
By Propositions~\ref{prop:inv} and~\ref{prop:poly}, we directly obtain that $13$ out of the $27$ tuples given in Corollary~\ref{cor:classes} are not admissible. We performed an exhaustive search for APN permutations in 11 of the remaining $14$ tuples using Algorithm~\ref{alg:search}.

Class 1 corresponds to the shift-invariant permutations and obviously contains all the monomial permutations. By letting Algorithm~\ref{alg:search} run for several days on a cluster with 256 cores, we were able to finish the search for APN permutations in this class. We obtained that the APN monomial permutations are (up to CCZ-equivalence) the \emph{only} shift-invariant APN permutations in  this dimension (see Table~\ref{tab:results}). 
\begin{theorem}
Up to CCZ-equivalence, a shift-invariant APN permutation in dimension $7$ must be a monomial function.
\end{theorem}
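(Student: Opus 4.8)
The plan is to recognize this statement as a direct consequence of the exhaustive search already set up for Class~1 of the $n=7$ classification in Corollary~\ref{cor:classes}, and to make that search actually terminate. The first step is the identification established in Section~\ref{sec:shift-inv}: a permutation $F\colon \F_2^7 \rightarrow \F_2^7$ is shift-invariant precisely when it commutes with the Frobenius (squaring) map, i.e.\ when
\[\left[\begin{array}{cc}\companion(X^7+1) & 0\\ 0 & \companion(X^7+1)\end{array}\right] \in \AutLE(F).\]
This is exactly the tuple $(B,A)=(\companion(X^7+1),\companion(X^7+1))$ listed as Class~1. Hence the set of shift-invariant APN permutations coincides, up to CCZ-equivalence, with the set of solutions produced by running Algorithm~\ref{alg:search} on this single tuple, and it suffices to enumerate that set and verify that every solution is CCZ-equivalent to a monomial.

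Second, I would analyze the orbit structure of $A = \companion(X^7+1)$ on $\F_2^7$ to understand the shape of the tree search. Since the minimal (and characteristic) polynomial of $A$ is $X^7+1$ and $7$ is prime, $A$ has order $7$, and its fixed space $\Ord(A,1)=\ker(A+I_7)$ is one-dimensional; thus $A$ decomposes $\F_2^7$ into the two fixed points $0$ and one further fixed vector together with eighteen orbits of length $7$. Because $B=A$, the self-equivalence $F\circ A = A\circ F$ forces $F(A^i x)=A^i F(x)$, so a candidate $F$ is determined by its values on one representative per orbit. Algorithm~\ref{alg:search} fixes a value $y$ for each representative $x$ (subject to $\ord_A(x)=\ord_B(y)$), propagates it along the whole orbit, and prunes via the incremental even-weight DDT check as soon as some difference yields a DDT entry exceeding $2$.

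The genuinely hard part is not the mathematics but the \emph{size} of Class~1: it contains all permutation polynomials over $\F_{2^7}$ with coefficients in $\F_2$, of which there are more than $2\cdot 10^{31}$, far beyond a naive traversal. To make the search terminate, I would rely on the reductions already available: Lemma~\ref{lem:comm} lets us retain only the lexicographically smallest representative within each coset $\comm(B)\circ F \circ \comm(A)$ (the \textsc{isSmallest} test), which removes most of the affine-equivalence redundancy, while the even-weight observation of~\cite{DBLP:conf/eurocrypt/BethD93} halves the DDT bookkeeping. Even with these, the decisive ingredient is heavy parallel computation: the expectation, borne out in Table~\ref{tab:results}, is that after several CPU-days on a many-core cluster the search exhausts the tree and returns only the six classes $x\mapsto x^{5}, x^{9}, x^{63}, x^{78}, x^{85}, x^{88}$.

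Finally, I would close the argument by observing that every returned solution is a monomial, so no non-monomial shift-invariant APN permutation exists; conversely, each of these exponents is coprime to $2^7-1=127$ and does yield an APN permutation of $\F_{2^7}$. Up to CCZ-equivalence these six monomials are exactly the APN monomial permutations in dimension~$7$, which establishes the claim.
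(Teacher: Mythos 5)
Your proposal takes essentially the same approach as the paper: you correctly identify the shift-invariant permutations with Class~1, i.e., the tuple $B=A=\companion(X^7+1)$, and establish the theorem by an exhaustive run of Algorithm~\ref{alg:search} on that single tuple (with the $\comm$-based smallest-representative reduction of Lemma~\ref{lem:comm} and the even-weight DDT pruning, executed on a many-core cluster), which returns only the six monomial CCZ classes listed in Table~\ref{tab:results}. Your additional remarks on the orbit structure of the cyclic shift and the coprimality of the surviving exponents with $127$ are correct and consistent with the paper's computation.
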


The 6 APN monomial permutations in dimension 7 are also contained in Classes 4, 5, 7, 8, 9, and 10, respectively. Those classes correspond to tuples $(B,A)$, where $A$ and $B$ correspond to multiplications by elements in the finite field $\F_{2^7}$. We did not find any other APN permutations. This allows us to state the following theorem, which summarizes the three missing cases that are infeasible to handle with Algorithm~\ref{alg:search}.

\begin{theorem}
\label{thm:res7}
Let $F$ be an APN permutation in dimension 7 with $|\AutLE(F)| > 1$ that is not CCZ-equivalent to a monomial function. Then, $F$ is CCZ-equivalent to a permutation $G$ for which $G \circ A = B \circ G$ with
\begin{enumerate}
    \item $B = A = I_2 \oplus \companion(X^5+1)$ \quad or
    \item $B = A = I_1 \oplus \companion(X^3+1) \oplus \companion(X^3+1)$ \quad or
    \item $B = A = \companion(X^2+X+1) \oplus \companion(X^2+X+1) \oplus \companion(X^3+1)$.
\end{enumerate}
\end{theorem}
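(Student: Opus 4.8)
The plan is to turn this into a finite case check driven by Corollary~\ref{cor:classes} and then to discharge all but three cases, either by the structural obstructions of Propositions~\ref{prop:inv} and~\ref{prop:poly} or by the exhaustive search of Algorithm~\ref{alg:search}. First I would invoke Corollary~\ref{cor:classes}: since $F$ is an APN permutation with $|\AutLE(F)|>1$, and since a permutation is always CCZ-equivalent to its inverse, there is a permutation $G$ that is linear-equivalent to $F$ or to $F^{-1}$ and whose LE-automorphism, after putting the two matrices into rational canonical form, corresponds to one of the $27$ tuples $(\widetilde{B},\widetilde{A})$ listed there. It therefore suffices to run through these $27$ representatives and, for each, decide whether an APN permutation realizing that self-equivalence exists and, if so, whether every such permutation is CCZ-equivalent to a monomial function.

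Next I would prune the list with the two admissibility criteria. By Proposition~\ref{prop:inv}, any tuple for which some common invariant dimension $\dim\Ord(A,i)=\dim\Ord(B,i)$ lands in $\{2,4,6\}$ (here $n-1=6$) is inadmissible, since it would force an APN permutation in a dimension admitting none. By Proposition~\ref{prop:poly}, any tuple of prime order $k$ for which a quadrinomial in $\F_2[X]/(X^k+1)$ is a common multiple of the minimal polynomials of $A$ and $B$ is likewise inadmissible. Checking both conditions mechanically against the $27$ tuples eliminates $13$ of them, exactly as recorded in Table~\ref{tab:results}.

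For each of the remaining $14$ tuples I would then run Algorithm~\ref{alg:search}, using Lemma~\ref{lem:comm} to quotient candidate S-boxes by conjugation with $\comm(A)$ and $\comm(B)$, and the fixed-point reduction in the extendable cases. In seven of these tuples the search terminates and returns only the known APN monomial permutations up to CCZ-equivalence; among these is the shift-invariant class, whose completion is itself a heavy computation and yields the preceding theorem that shift-invariant APN permutations in dimension $7$ are monomial. In four further tuples the search terminates with no APN permutation at all. Thus every one of these $11$ resolved tuples either produces only monomials or produces nothing, and together with the $13$ inadmissible tuples this settles $24$ of the $27$ classes.

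Consequently, if $F$ is not CCZ-equivalent to a monomial function, then the LE-automorphism of $G$ (equivalently, of $F$ or $F^{-1}$) cannot coincide with any of these $24$ resolved tuples, so it must be one of the three tuples on which Algorithm~\ref{alg:search} did not terminate, namely $B=A=I_2\oplus\companion(X^5+1)$, $B=A=I_1\oplus\companion(X^3+1)\oplus\companion(X^3+1)$, and $B=A=\companion(X^2+X+1)\oplus\companion(X^2+X+1)\oplus\companion(X^3+1)$, which are exactly the three cases asserted. The main obstacle is precisely these three cases: each matrix $A=B$ decomposes into many short cycles (a large fixed space in the first case, and numerous short orbits of the order-$3$ parts in the other two), so the self-equivalence fixes only a few entries of the S-box per branching step and the branching factor of Algorithm~\ref{alg:search} stays too large to exhaust in feasible time. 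The theorem is therefore best read as a conditional classification that certifies completeness of the search on $24$ of the $27$ relevant tuples and isolates the three that resist exhaustion.
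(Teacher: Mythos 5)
Your proposal is correct and follows essentially the same route as the paper: invoke Corollary~\ref{cor:classes} to reduce to the $27$ tuples $(B,A)$, eliminate $13$ of them via Propositions~\ref{prop:inv} and~\ref{prop:poly}, resolve $11$ more by the exhaustive search of Algorithm~\ref{alg:search} (seven yielding only the APN monomial permutations, four yielding nothing), and conclude that any non-monomial example must realize one of the three remaining tuples --- exactly Classes 16, 22, and 23 as recorded in Table~\ref{tab:results}. Your reading of the theorem as a conditional classification certifying completeness on $24$ of the $27$ tuples matches the paper's intent precisely.
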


\subsection{Results for $n=8$}
By Propositions~\ref{prop:inv} and~\ref{prop:poly}, we directly obtain that $15$ out of the $32$ tuples given in Corollary~\ref{cor:classes} are not admissible. We performed an exhaustive search for APN permutations in 15 of the remaining $17$ tuples using Algorithm~\ref{alg:search}. We did not find any APN permutation.  To conclude, we state the following theorem. 

\begin{theorem}
\label{thm:res8}
Let $F$ be an APN permutation in dimension 8 with $|\AutLE(F)| > 1$. Then, $F$ is CCZ-equivalent to a permutation $G$ for which $G \circ A = B \circ G$ with
    \begin{enumerate}
        \item $B = A = \companion(X^4+X^3+X^2+X+1) \oplus \companion(X^4+X^3+X^2+X+1)$ \quad or
        \item $B = A = I_2 \oplus \companion(X^2+1) \oplus \companion(X^2+1) \oplus \companion(X^2+1)$.
    \end{enumerate}
\end{theorem}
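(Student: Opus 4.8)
The plan is to derive the theorem from the classification of Corollary~\ref{cor:classes} by showing that $30$ of the $32$ candidate tuples $(B,A)$ listed there for $n=8$ are not admissible, leaving exactly Classes $22$ and $30$. Since $F$ is an APN permutation with $|\AutLE(F)| > 1$ and $F$ is CCZ-equivalent to $F^{-1}$, Corollary~\ref{cor:classes} provides a permutation $G$, linear-equivalent to $F$ or $F^{-1}$ and hence itself an APN permutation, with $G \circ A = B \circ G$ for one of those $32$ tuples. Ruling out every tuple except $22$ and $30$ then forces $G$ into one of the two stated cases.

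First I would dispatch the tuples that fail a necessary admissibility condition. Proposition~\ref{prop:inv} excludes every tuple for which some $\dim \Ord(A,i) = \dim \Ord(B,i)$ lies in $\{2,4,7\}$ (note $n-1 = 7$), because the associated fixed affine subspaces would force an APN permutation in one of those forbidden dimensions. Proposition~\ref{prop:poly} excludes every prime-order tuple for which a quadrinomial $X^a + X^b + X^c + 1 \in \F_2[X]/(X^k+1)$ is a common multiple of the minimal polynomials of $A$ and $B$. Both are finite mechanical checks --- computing the fixed-space dimensions of the relevant matrix powers and searching for quadrinomial common multiples --- and I would record the outcomes in a table as in Table~\ref{tab:results}. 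This removes $15$ tuples.

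For the surviving tuples I would run the depth-first search of Algorithm~\ref{alg:search}, which enumerates, up to linear equivalence, every APN permutation $G$ with $G \circ A = B \circ G$ by assigning images orbit-by-orbit along the matched $A$- and $B$-cycles and pruning with the incremental DDT test. A search tree that is exhausted without ever completing the look-up table certifies that the tuple is not admissible. Carried out carefully, this empties the output for $15$ of the $17$ surviving tuples, so that together with the propositions every class except $22$ and $30$ is excluded, which proves the theorem.

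The decisive difficulty is computational rather than conceptual: Algorithm~\ref{alg:search} becomes inefficient precisely when $A$ has small order and its orbit decomposition consists of many short cycles, since each fixed value then constrains few others and the search tree branches too widely. This is what leaves Class $22$ (order $5$) and the involution of Class $30$, namely $I_2 \oplus \companion(X^2+1) \oplus \companion(X^2+1) \oplus \companion(X^2+1)$, beyond reach, just as Class $14$ was the bottleneck for $n=6$. Here, however, the fixed-subspace reduction that closed the $6$-bit case does not help: the space $\Ord(A,1)$ in Class $30$ has dimension $5$, where several affine-inequivalent APN permutations exist, so no single small APN permutation can be pre-fixed to prune the search. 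Hence these two tuples remain unresolved and must appear in the statement.
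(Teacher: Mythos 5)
Your proposal is correct and takes essentially the same route as the paper: the paper proves Theorem~\ref{thm:res8} exactly by ruling out $15$ of the $32$ tuples from Corollary~\ref{cor:classes} via Propositions~\ref{prop:inv} and~\ref{prop:poly} and then exhausting $15$ of the remaining $17$ tuples with Algorithm~\ref{alg:search}, leaving precisely Classes $22$ and $30$ as the unresolved cases in the statement. Your added explanation of why those two classes (order $5$ and the involution) defeat the search, and why the fixed-subspace trick from $n=6$ is unavailable since $\dim \Ord(A,1)=5$ admits several affine-inequivalent APN permutations, is accurate commentary consistent with Table~\ref{tab:results} and the paper's randomized-search discussion.
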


Table~\ref{tab:results} summarizes our results. The source code of our implementation of Algorithm~\ref{alg:search} can be found in~\cite{cbe90}. For checking whether the solutions that we find are CCZ-equivalent to an already known APN permutation, we used the equivalent condition on code equivalence as explained in \cite{browning2010apn}. Practically, we used the code equivalence algorithm of the computer algebra system Magma \cite{MR1484478}, which for $n=7$ takes a few seconds on a PC. 

\subsection{Randomized Search}
In case that the search space for a tuple $(B,A)$ is so large that handling it with Algorithm~\ref{alg:search} would be infeasible, we can perform a random search for APN permutations $F$ for which $F \circ A = B \circ F$. It is straightforward to implement a randomized version of Algorithm 1. For that, before the initial call of $\textsc{NextVal}$, we randomly shuffle the order in which the values for $y$ are iterated in line 10. We abort the search after a predetermined amount of time and repeat with a new initial shuffling. Furthermore, since we are not aiming for an \emph{exhaustive} search, we omit the check for the smallest representative, i.e., set $t$ to $-1$.

We applied the randomized search for the 5 tuples for which we were not able to finish the exhaustive search, i.e., Classes 16, 22, and 23 for $n=7$, and Classes 22 and 30 for $n=8$. We did not find any APN permutation by letting the algorithm run for at least 128 CPU days for each of those cases.

\section{Conclusion and Open Questions}
We observed that all APN permutations known from the literature contain a permutation in their CCZ-equivalence class that admit a non-trivial linear self-equivalence. We performed an exhaustive search for $6$-bit APN permutations with such non-trivial linear self-equivalences and a partial search in dimension 7 and 8. 

We expect that there are no more APN permutations with non-trivial linear self-equivalences in dimension 7 and 8. As open problems, it would be interesting to settle the cases described in Theorems~\ref{thm:res7} and~\ref{thm:res8}, i.e., to show that those cases contain no APN permutations. Another (very ambitious) open problem is to prove or disprove Conjecture~\ref{conj}. This would certainly be considered a major breakthrough in the theory of APN functions.

\subsection*{Acknowledgment}
We thank the anonymous reviewers for their detailed and helpful comments. We further thank Anne Canteaut, Yann Rotella and Cihangir Tezcan for fruitful discussions at an early stage of this project.


\begin{thebibliography}{10}

\bibitem{thesis_arshad}
R.~Arshad.
\newblock {\em Contributions to the theory of almost perfect nonlinear
  functions}.
\newblock PhD thesis, Otto-von-Guericke-Universit\"at Magdeburg, 2018.

\bibitem{cbe90}
cbe90.
\newblock {cbe90/self\_equivalent\_apn: Self-Equivalent APN 
                   Permutations v1.1}.
\newblock {\em Software, Zenodo}, 2020.
\newblock DOI: 10.5281/zenodo.4017782.

\bibitem{DBLP:conf/eurocrypt/BethD93}
T.~Beth and C.~Ding.
\newblock On almost perfect nonlinear permutations.
\newblock In T.~Helleseth, editor, {\em Advances in Cryptology - {EUROCRYPT}
  '93, Workshop on the Theory and Application of of Cryptographic Techniques,
  Lofthus, Norway, May 23-27, 1993, Proceedings}, volume 765 of {\em Lecture
  Notes in Computer Science}, pages 65--76. Springer, 1993.

\bibitem{DBLP:journals/joc/BihamS91}
E.~Biham and A.~Shamir.
\newblock Differential cryptanalysis of {DES}-like cryptosystems.
\newblock {\em J. Cryptology}, 4(1):3--72, 1991.

\bibitem{MR1484478}
W.~Bosma, J.~Cannon, and C.~Playoust.
\newblock The {M}agma algebra system. {I}. {T}he user language.
\newblock {\em J. Symbolic Comput.}, 24(3-4):235--265, 1997.
\newblock Computational algebra and number theory (London, 1993).

\bibitem{DBLP:journals/ffa/BrackenBMM08}
C.~Bracken, E.~Byrne, N.~Markin, and G.~McGuire.
\newblock New families of quadratic almost perfect nonlinear trinomials and
  multinomials.
\newblock {\em Finite Fields and Their Applications}, 14(3):703--714, 2008.

\bibitem{DBLP:journals/ccds/BrackenBMM11}
C.~Bracken, E.~Byrne, N.~Markin, and G.~McGuire.
\newblock A few more quadratic {APN} functions.
\newblock {\em Cryptography and Communications}, 3(1):43--53, 2011.

\bibitem{DBLP:journals/dcc/BrackenBMN11}
C.~Bracken, E.~Byrne, G.~McGuire, and G.~Nebe.
\newblock On the equivalence of quadratic {APN} functions.
\newblock {\em Des. Codes Cryptogr.}, 61(3):261--272, 2011.

\bibitem{DBLP:journals/dcc/BrinkmannL08}
M.~Brinkmann and G.~Leander.
\newblock On the classification of {APN} functions up to dimension five.
\newblock {\em Des. Codes Cryptogr.}, 49(1-3):273--288, 2008.

\bibitem{browning2009apn}
K.~Browning, J.~Dillon, R.~E. Kibler, and M.~McQuistan.
\newblock {APN} polynomials and related codes.
\newblock {\em Special volume of Journal of Combinatorics, Information and
  System Sciences}, 34(1--4):135--159, 2009.
  
\bibitem{browning2010apn}
K.~Browning, J.~Dillon, M.~McQuistan, and A.~Wolfe.
\newblock An {APN} permutation in dimension six.
\newblock {\em Finite Fields: theory and applications}, 518:33--42, 2010.

\bibitem{DBLP:conf/waifi/Budaghyan07}
L.~Budaghyan.
\newblock The simplest method for constructing {APN} polynomials
  {EA}-inequivalent to power functions.
\newblock In C.~Carlet and B.~Sunar, editors, {\em Arithmetic of Finite Fields,
  First International Workshop, {WAIFI} 2007, Madrid, Spain, June 21-22, 2007,
  Proceedings}, volume 4547 of {\em Lecture Notes in Computer Science}, pages
  177--188. Springer, 2007.

\bibitem{cryptoeprint:2020:295}
L.~Budaghyan, M.~Calderini, C.~Carlet, R.~Coulter, and I.~Villa.
\newblock Generalized isotopic shift construction for {APN} functions.
\newblock {\em Des. Codes Cryptogr.}, 89(1):19--32, 2021.

\bibitem{9000901}
L.~{Budaghyan}, M.~{Calderini}, C.~{Carlet}, R.~S. {Coulter}, and I.~{Villa}.
\newblock Constructing {APN} functions through isotopic shifts.
\newblock {\em {IEEE} Trans. Information Theory}, 66(8):5299--5309, 2020.

\bibitem{recent_list}
L.~Budaghyan, M.~{Calderini}, and I.~{Villa}.
\newblock On equivalence between known families of quadratic APN functions.
\newblock {\em Finite Fields and Their Applications}, 66, 101704, 2020.

\bibitem{DBLP:journals/tit/BudaghyanC08}
L.~Budaghyan and C.~Carlet.
\newblock Classes of quadratic {APN} trinomials and hexanomials and related
  structures.
\newblock {\em {IEEE} Trans. Information Theory}, 54(5):2354--2357, 2008.

\bibitem{DBLP:journals/tit/BudaghyanCL08}
L.~Budaghyan, C.~Carlet, and G.~Leander.
\newblock Two classes of quadratic {APN} binomials inequivalent to power
  functions.
\newblock {\em {IEEE} Trans. Information Theory}, 54(9):4218--4229, 2008.

\bibitem{DBLP:journals/ffa/BudaghyanCL09}
L.~Budaghyan, C.~Carlet, and G.~Leander.
\newblock Constructing new {APN} functions from known ones.
\newblock {\em Finite Fields and Their Applications}, 15(2):150--159, 2009.

\bibitem{5351383}
L.~{Budaghyan}, C.~{Carlet}, and G.~{Leander}.
\newblock On a construction of quadratic {APN} functions.
\newblock In {\em 2009 IEEE Information Theory Workshop}, pages 374--378, Oct
  2009.

\bibitem{DBLP:journals/tit/BudaghyanCP06}
L.~Budaghyan, C.~Carlet, and A.~Pott.
\newblock New classes of almost bent and almost perfect nonlinear polynomials.
\newblock {\em {IEEE} Trans. Information Theory}, 52(3):1141--1152, 2006.

\bibitem{DBLP:journals/iacr/BudaghyanHK19}
L.~Budaghyan, T.~Helleseth, and N.~S. Kaleyski.
\newblock A new family of {APN} quadrinomials.
\newblock {\em {IEEE} Trans. Information Theory}, 66(11):7081--7087, 2020.

\bibitem{DBLP:journals/tit/CanteautDP17}
A.~Canteaut, S.~Duval, and L.~Perrin.
\newblock A generalisation of {D}illon's {APN} permutation with the best known
  differential and nonlinear properties for all fields of size
  2\({}^{\mbox{4k+2}}\).
\newblock {\em {IEEE} Trans. Information Theory}, 63(11):7575--7591, 2017.

\bibitem{DBLP:journals/ffa/CanteautP19}
A.~Canteaut and L.~Perrin.
\newblock On {CCZ}-equivalence, extended-affine equivalence, and function
  twisting.
\newblock {\em Finite Fields and Their Applications}, 56:209--246, 2019.

\bibitem{butterfly_not_apn}
A.~Canteaut, L.~Perrin, and S.~Tian.
\newblock  If a generalised butterfly is APN then it operates on 6 bits.
\newblock {\em Cryptography and Communications}, 11(6):1147--1164, 2019.

\bibitem{carlet_book}
C.~Carlet.
\newblock {\em Boolean Functions for Cryptography and Coding Theory}.
\newblock Cambridge University Press, 2021.

\bibitem{DBLP:journals/dcc/Carlet11}
C.~Carlet.
\newblock Relating three nonlinearity parameters of vectorial functions and
  building {APN} functions from bent functions.
\newblock {\em Des. Codes Cryptogr.}, 59(1-3):89--109, 2011.

\bibitem{ccz}
C.~Carlet, P.~Charpin, and V.~Zinoviev.
\newblock Codes, bent functions and permutations suitable for DES-like cryptosystems.
\newblock {\em Des. Codes Cryptogr.}, 15(2):125--156, 1998.

\bibitem{carlitz1972permutations}
L.~Carlitz and D.~Hayes.
\newblock Permutations with coefficients in a subfield.
\newblock {\em Acta Arithmetica}, 21(1):131--135, 1972.

\bibitem{de2007analysis}
C.~De~Canni{\`e}re.
\newblock {\em Analysis and design of symmetric encryption algorithms}.
\newblock PhD thesis, KULeuven, 2007.

\bibitem{banff}
J.~F. Dillon.
\newblock {APN} polynomials and related codes.
\newblock Banff International Research Station workshop on Polynomials over
  Finite Fields and Applications, 2006.

\bibitem{dummit1991}
D.~S. Dummit and R.~M. Foote.
\newblock {\em Abstract algebra}.
\newblock John Wiley and Sons, Inc., 2004.

\bibitem{DBLP:journals/tit/EdelKP06}
Y.~Edel, G.~M.~M. Kyureghyan, and A.~Pott.
\newblock A new {APN} function which is not equivalent to a power mapping.
\newblock {\em {IEEE} Trans. Information Theory}, 52(2):744--747, 2006.

\bibitem{DBLP:journals/amco/EdelP09}
Y.~Edel and A.~Pott.
\newblock A new almost perfect nonlinear function which is not quadratic.
\newblock {\em Adv. in Math. of Comm.}, 3(1):59--81, 2009.

\bibitem{feulner2009automorphism}
T.~Feulner.
\newblock The automorphism groups of linear codes and canonical representatives
  of their semilinear isometry classes.
\newblock {\em Adv. in Math. of Comm.}, 3(4):363--383, 2009.

\bibitem{DBLP:journals/tit/Gold68}
R.~Gold.
\newblock Maximal recursive sequences with 3-valued recursive cross-correlation
  functions.
\newblock {\em {IEEE} Trans. Information Theory}, 14(1):154--156, 1968.

\bibitem{DBLP:journals/dam/Hou06}
X.~Hou.
\newblock Affinity of permutations of $\mathbb{F}_2^n$.
\newblock {\em Discrete Applied Mathematics}, 154(2):313--325, 2006.

\bibitem{DBLP:conf/eurocrypt/Nyberg93}
K.~Nyberg.
\newblock Differentially uniform mappings for cryptography.
\newblock In T.~Helleseth, editor, {\em Advances in Cryptology - {EUROCRYPT}
  '93, Workshop on the Theory and Application of of Cryptographic Techniques,
  Lofthus, Norway, May 23-27, 1993, Proceedings}, volume 765 of {\em Lecture
  Notes in Computer Science}, pages 55--64. Springer, 1993.

\bibitem{DBLP:conf/crypto/NybergK92}
K.~Nyberg and L.~R. Knudsen.
\newblock Provable security against differential cryptanalysis.
\newblock In E.~F. Brickell, editor, {\em Advances in Cryptology - {CRYPTO}
  '92, 12th Annual International Cryptology Conference, Santa Barbara,
  California, USA, August 16-20, 1992, Proceedings}, volume 740 of {\em Lecture
  Notes in Computer Science}, pages 566--574. Springer, 1992.

\bibitem{Perrin_sboxu}
L.~Perrin.
\newblock {sboxU}.
\newblock {\em GitHub repository}, 2017.
\newblock Availabe via \url{https://github.com/lpp-crypto/sboxU}, commit
  fb8941790ce6850d9fbdee3334ca2e2d381fb24c.

\bibitem{DBLP:conf/crypto/PerrinUB16}
L.~Perrin, A.~Udovenko, and A.~Biryukov.
\newblock Cryptanalysis of a theorem: Decomposing the only known solution to
  the big {APN} problem.
\newblock In M.~Robshaw and J.~Katz, editors, {\em Advances in Cryptology -
  {CRYPTO} 2016 - 36th Annual International Cryptology Conference, Santa
  Barbara, CA, USA, August 14-18, 2016, Proceedings, Part {II}}, volume 9815 of
  {\em Lecture Notes in Computer Science}, pages 93--122. Springer, 2016.
  
\bibitem{normal_basis}
V.~Rijmen, P.~S.L.M.~Barreto, and D.~L.~{Gazzoni Filho}.
\newblock Rotation symmetry in algebraically generated cryptographic substitution tables.
\newblock {\em Information Processing Letters}, 106(6):246--250, 2008.

\bibitem{oeis}
N.~J.~A. {Sloane (editor)}.
\newblock The on-line encyclopedia of integer sequences.
\newblock Sequence A326932, published electronically at \url{https://oeis.org},
  2019.

\bibitem{thesis_sun}
B.~Sun.
\newblock {\em On Classification and Some Properties of {APN} Functions}.
\newblock PhD thesis, University of Bergen, 2018.

\bibitem{taniguchi}
H.~Taniguchi.
\newblock On some quadratic {APN} functions.
\newblock {\em Des. Codes Cryptogr.}, 87(9):1973--1983, 2019.

\bibitem{weng2013quadratic}
G.~Weng, Y.~Tan, and G.~Gong.
\newblock On quadratic almost perfect nonlinear functions and their related
  algebraic object.
\newblock In {\em Workshop on Coding and Cryptography, WCC.}, 2013.

\bibitem{Yoshiara2016}
S.~Yoshiara.
\newblock Equivalences of power {APN} functions with power or quadratic {APN}
  functions.
\newblock {\em Journal of Algebraic Combinatorics}, 44(3):561--585, 2016.

\bibitem{yu:eprint}
Y.~Yu, N.~Kaleyski, L.~Budaghyan, and Y.~Li.
\newblock Classification of quadratic {APN} functions with coefficients in {F2}
  for dimensions up to 9.
\newblock {\em Finite Fields and Their Applications}, 68:101733, 2020.

\bibitem{matrix_full}
Y.~Yu, M.~Wang, and Y.~Li.
\newblock A matrix approach for constructing quadratic {APN} functions.
\newblock {Cryptology ePrint Archive, Report 2013/007, 2013.}, \url{https://eprint.iacr.org/2013/007}.

\bibitem{DBLP:journals/dcc/YuWL14}
Y.~Yu, M.~Wang, and Y.~Li.
\newblock A matrix approach for constructing quadratic {APN} functions.
\newblock {\em Des. Codes Cryptogr.}, 73(2):587--600, 2014.


\bibitem{ZHOU201343}
Y.~Zhou and A.~Pott.
\newblock A new family of semifields with 2 parameters.
\newblock {\em Advances in Mathematics}, 234:43 -- 60, 2013.

\end{thebibliography}
\end{document}